\newtheorem{theorem}{Theorem}
\newtheorem{lemma}{Lemma}
\newcommand{\typ}[2]{\ensuremath{T_{#1}(#2)}}
\newcommand{\styp}[1]{\ensuremath{T_{#1}}}
\def\chunkeps{\epsilon_{4}}
\def\traineps{\epsilon_{5}}
\def\chunkmoveeps{\epsilon_{6}}
\def\cbookeps{\epsilon_{7}}
\def\rlesseps{\epsilon_{8}}
\def\ratelesserr{\hat{\varepsilon}}
\def\codechunk{c}
\def\schemeerr{\varepsilon_{\mathrm{dec}}}
\def\rateerr{\varepsilon_{\mathrm{ach}}}
\def\hb{\ensuremath{h}}
\def\rbits{k}
\def\chunk{b}
\def\train{t}
\def\badrate{\tau}
\def\algerror{\varepsilon}
\def\mlo{M_{\ast}}
\def\mhi{M^{\ast}}
\def\itrain{T}
\def\icode{U}
\def\Rfb{R_{\mathrm{fb}}}
\def\nfb{n_{\mathrm{fb}}}
\def\Bfb{B_{\mathrm{fb}}}
\newcommand{\empchanshort}[1]{\ensuremath{W_{#1}}}
\newcommand{\empchan}[3]{\ensuremath{W_{#3}(#2 | #1)}}
\newcommand{\empMI}[2]{\ensuremath{{I}\left(#1, #2\right)}}
\newcommand{\empcap}[1]{\ensuremath{\bar{C}(#1)}}
\newcommand{\hWempnl}[3]{\ensuremath{\hat{w}_r^{(#3)}(#2 | #1)}}
\newcommand{\hWempr}[1]{\ensuremath{\hat{W}_r^{(#1)}}}
\newcommand{\hWemprl}[3]{\ensuremath{\hat{W}_r^{(#3)}(#2 | #1)}}
\def\capmax{C_{\max}}
\def\rategap{\epsilon_{1}}
\def\rateloss{\rho}
\def\fbrate{\lambda}
\def\prob{\mathbb{P}}
\def\expe{\mathbb{E}}
\def\mbf{\mathbf}
\def\mc{\mathcal}
\def\argmax{\mathop{\rm argmax}}
\title{Zero-rate feedback can achieve the empirical capacity}
\author{\authorblockN{Krishnan Eswaran, Anand D. Sarwate, Anant Sahai, and Michael Gastpar} \\
\authorblockA{Department of Electrical Engineering and Computer Sciences\\
University of California, Berkeley\\
Berkeley, CA 94720, USA \\
Email: \{keswaran, asarwate, sahai, gastpar\}@eecs.berkeley.edu}%
\thanks{Manuscript received October XX, 2007;
revised XXXXXXXXXXXXXX.  }%
\thanks{Part of this work was presented at the 2007 International Symposium on Information Theory in Nice, France \cite{EswaranSSG:07isit}}
\thanks{K. Eswaran, A. Sahai, and M. Gastpar are with the Department of Electrical Engineering and Computer Sciences, University of California, Berkeley, Berkeley CA 94720-1770 USA. A. D. Sarwate was with the Department of Electrical Engineering and Computer Sciences, University of California, 
Berkeley. He is now with the Information Theory and Applications Center at the University of California, San Diego, La Jolla, CA 92093-0447 USA.}
\thanks{The work of A.D. Sarwate and M. Gastpar was supported in part by the National Science Foundation under award CCF-0347298.  The work of K. Eswaran, A. Sahai, and M. Gastpar was supported in part by the National Science Foundation under award CNS-0326503.}
}
\begin{document}

\maketitle

\begin{abstract}
The utility of limited feedback for coding over an individual sequence of DMCs is investigated. This study complements recent results showing how limited or noisy feedback can boost the reliability of communication. A strategy with fixed input distribution $P$ is given that asymptotically achieves rates arbitrarily close to the mutual information induced by $P$ and the state-averaged channel. When the capacity achieving input distribution is the same over all channel states, this achieves rates at least as large as the capacity of the state averaged channel, sometimes called the empirical capacity.
\end{abstract}

\section{Introduction}

Many contemporary communication systems can be modeled via a time-varying state.  For example, in wireless communications, the channel variation may be caused by neighboring systems, mobility, or other factors that are difficult to model.  In order to design robust communication strategies, engineers should adopt an appropriate model that can capture the channel dynamics.  One such model is the so-called arbitrarily varying channel (AVC), where the state can depend on the communication strategy and is selected in the worst possible manner.  One interpretation of this model is that there is a fixed rate that one wants to support over the worst possible channel states. An alternative and perhaps more relevant approach is an individual sequence model, where the state is fixed but unknown and not dependent on the communication strategy.  Here, a natural requirement is for a strategy to perform well whenever the state sequence is favorable, while for less favorable state sequences, inferior performance is acceptable. Essentially, this model considers the case in which one wants to adapt the rate to what the specific state sequence can support.

In order to achieve this variation in performance, the encoder must obtain some measure of the quality of the state sequence.  This requires additional resources, and the most natural model is to introduce \textit{feedback} from the receiver to the transmitter.  A second resource is joint \textit{randomization}
between the encoder and the decoder, which can also be enabled via feedback.  The encoder can use feedback to estimate the channel quality and hence communicate at rates commensurate with the channel quality.  Two fundamental questions are the following: first, how good a performance (in terms of achievable rate) can one expect for favorable state sequences? Second, how much feedback is required to attain this performance?  Many of the works in this area can be understood in terms of how they answer these two questions.

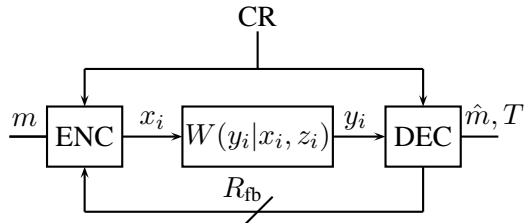
\begin{figure}[ht]
	\centering
\begin{pspicture}(-1.0,0)(5.5,2.3)

\rput(0.2,0.4){\pnode{SENSINK}}%
\rput(0.2,-0.2){\pnode{SENSOURCE}}%
\ncline{->}{SENSOURCE}{SENSINK}

\rput(1.5,0.8){\psframe(0,-0.4)(2.0,0.4)} %
\rput(2.5,0.8){$\textstyle{W(y_i|x_i,z_i)}$}%
\rput(1.5,0.8){\pnode{AtoCH}}%
\rput(3,0.8){\pnode{CHtoC}}%
\rput(1.1,1.02){$\textstyle{x_i}$}%
\rput(3.8,1.02){$\textstyle{y_i}$}%

\rput(-0.3,0.8){\psframe(0,-0.4)(1.0,0.4)} %
\rput(0.2,0.8){ENC}%
\rput(0.7,0.8){\pnode{A}}%

\rput(0.2,1.2){\pnode{ARA}}%

\rput(-0.8,0.8){\pnode{CRS}} %
\rput(-0.6,1.02){$\textstyle{m}$}%
\rput(-0.3,0.8){\pnode{CRT}}%
\ncline{CRS}{CRT}%

\rput(4.2,0.8){\psframe(0,-0.4)(1.0,0.4)} %
\rput(4.7,0.8){DEC}%
\rput(3.7,0.8){\pnode{C}}%

\rput(4.2,1.2){\pnode{ARC}}%

\rput(5.2,0.8){\pnode{CRR}} %
\rput(5.65,1.07){$\textstyle{\hat{m},T}$}%
\rput(5.7,0.8){\pnode{CRD}}%

\ncline{CRS}{CRT}%

\rput(3.5,0.8){\pnode{CHoutY}}%
\rput(4.2,0.8){\pnode{DECinY}}%
\rput(4.7,0.4){\psline(0,0)(0,-0.6)}%
\rput(4.7,-0.2){\psline(0,0)(-4.5,0)}%
\rput(2.5,-0.2){\psline(-.2,-.2)(.2,.2)}%
\rput(2.3,.1){$R_{\text{fb}}$}%

\rput(4.7,1.7){\psline(0,0)(-4.5,0)}%
\rput(4.7,1.7){\psline{->}(0,0)(0,-0.5)}%
\rput(0.2,1.7){\psline{->}(0,0)(0,-0.5)}%
\rput(2.5,2.2){\psline(0,0)(0,-0.5)}%
\rput(2.5,2.4){CR}%

\ncline{->}{CHoutY}{DECinY}

\ncline{CRR}{CRD}%

\ncline{->}{A}{AtoCH} 

\end{pspicture}
	\caption{Model setup with limited feedback and common randomness. \label{fig:blockdiagram}}
\end{figure}

The main trade-off for the channel model at hand is the correct balance between the resources
spent on communication versus those spent on channel estimation.
This trade-off is well understood in the case where the channel state sequence is fully
revealed to the receiver, as shown in the work of Draper {\it et al.} \cite{DraperFK:05rateless}.
Regarding the first question, for any fixed input distribution, their scheme can achieve rates arbitrarily close to the mutual information of the channel with the state known to both the transmitter and receiver.  They also provide an interesting answer to the second question: a feedback link of vanishing rate is sufficient to attain this performance.  To sum up, when channel estimation at the receiver is free, feedback of vanishing rate is enough.

Shayevitz and Feder \cite{ShayevitzF:06empirical} consider the more realistic case where the decoder has only the channel outputs. They develop a scheme in which the receiver keeps estimating the state sequence. In their consideration, the transmitter has full (causal) output feedback and can thus also track the state sequence. For the class of channels they consider, Shayevitz and Feder establish an achievable rate that they call the ``empirical capacity,'' which they define as the capacity of an i.i.d. channel with transition probabilities corresponding to the empirical statistics of the noise sequence.  Therefore if feedback is free, then rates arbitrarily close to  the ``empirical capacity'' are achievable.

This paper is a commentary on this development: we consider the same notion of ``empirical capacity,'' but provide an answer to the second question. Specifically, for a fixed input distribution, we show that 
if common randomness is available, a feedback link of vanishing rate is sufficient to achieve the empirical mutual information, which in some settings, such as the class of channels considered by Shayevitz and Feder, coincides with the ``empirical capacity''.  To do this, we adapt the feedback-reducing block/chunk strategies used earlier in the context of reliability functions \cite{Sahai:06delay,DraperAllerton06}, and most specifically in \cite{Sahai:07balancing}. They are in turn inspired by Hybrid ARQ \cite{Soljanin:03harq}.  Thus, the flavor of our algorithm is different from \cite{ShayevitzF:06empirical}.  By doing away with the output feedback, we lose the simplicity of the scheme in \cite{ShayevitzF:06empirical}, but we show that similar rates can still be obtained with almost negligible feedback.

The strategy developed in this paper fits in the category of rateless codes, which are a class of coding strategies that use limited feedback to adapt to unknown channel parameters.  Most studies about feedback for rate and reliability have centered around full output feedback \cite{Horstein:63feedback,SchalkwijkK:66feedback,Burnashev:76feedback,CoverP:89GaussianFeedback,OoiW:98iterative,Ooi:98feedback,Kim:06GaussianFeedback,Sahai:06delay}; however, recent work has started to improve our understanding of how limited feedback affects these performance measures.  
For instance, limited feedback can be used to improve reliability \cite{Sahai:07balancing}. 
Furthermore, in some multiuser Gaussian channels, noisy feedback increases the achievable rates 
\cite{GastparK:06noisy,Wigger:06kailath} and the reliability \cite{DraperAllerton06,KimLW:06noisyfeedback}. In a rateless code the decoder can use a low-rate feedback link to inform the encoder when it decodes.  These codes were first studied in the context of the erasure channel \cite{Luby:02LT,Shokrollahi:03fountain}.  Later work focused on compound channels \cite{Shulman:03commonbc,DraperFK:04isit,TchamkertenT:06variable}.  The work of Draper et al. \cite{DraperFK:05rateless} is to our knowledge the first step towards adapting rateless codes to time-varying states.

We are now in a position to compare the modeling assumptions in these previous works 
with the current investigation; the comparisons are summarized in Table \ref{tab:relwork}.  The initial studies of rateless coding by Shulman \cite{Shulman:03commonbc} and Tchamkerten and Telatar \cite{TchamkertenT:06variable} used feedback to tune the rate to the realized parameter governing the channel behavior.  The study of time-varying states was first introduced by Draper et al. \cite{DraperFK:05rateless}, but they assumed full state information at the decoder, which leads to higher rates.  Most recently, Shayevitz and Feder \cite{ShayevitzF:06empirical} showed an explicit coding algorithm based on Horstein's method \cite{Horstein:63feedback} that achieves the empirical capacity.  Their scheme uses full feedback, but in turn works for a larger class of
channel models. Moreover, it is a horizon-free scheme.

\begin{table}
\begin{center}
\begin{tabular}{l|c|c|c|c}%
  	&   channel model   &   feedback   &  state information & common randomness \\ %
\hline
Shulman \cite{Shulman:03commonbc}
			& compound      &   full    &  none & none \\ %
Tchamkerten and Telatar \cite{TchamkertenT:06variable}
			& compound      &   full    &  none & none \\ %
Draper, Frey, and Kschischang \cite{DraperFK:05rateless} 
			& AVC      &   0-rate    &  at decoder & none \\ %
Shayevitz and Feder \cite{ShayevitzF:06empirical}
			& individual sequence &   full    & none &yes \\ %
This paper 
			&  individual sequence &   0-rate    & none & yes %
\end{tabular}
\end{center}
\caption{Related results and assumptions on channel model, feedback, state information and common randomness \label{tab:relwork}}
\end{table}

In our scheme, the encoder attempts to send $k$ bits over the channel during a variable-length \textit{round}.  The encoder sends \textit{chunks} of the codeword to the decoder, after which the decoder feeds back a decision as to whether it can decode.  The encoder and decoder use common randomness to choose a set of randomly chosen \textit{training} positions during which the encoder sends a pilot sequence.  The decoder uses the training positions to estimate the channel.  As soon as the total \textit{empirical mutual information} over the aggregate channel sufficiently exceeds $k$ bits, the decoder attempts to decode.  Through this combination of training-based channel estimation and robust decoding we can exploit the limited feedback to achieve rates asymptotically equal to those with advance knowledge of the average channel.  

In the next section, we motivate the study of this problem with some concrete examples. In Section \ref{sec:algorithm}, we define the channel model, state our main result, and describe the coding strategy.  Section \ref{sec:analysis} contains the analysis of our strategy with most of the technical details reserved for the Appendix.

\section{Motivating Examples}

The following two simple examples will prove useful in explaining the meaning of the main result of this paper, and help motivate the present study.  The first is the model considered in \cite{ShayevitzF:06empirical} -- a binary modulo-additive channel with a noise sequence whose empirical frequency of $1$'s is unknown.  In this example, the ``empirical mutual information'' under all state sequences is maximized by the uniform distribution, so our algorithm achieves the ``empirical capacity''.  In the second example we consider the $Z$-channel for which the input distribution maximizing the empirical mutual information is not identical for all state sequences, so our scheme will not in general achieve rates as high as the empirical capacity.

\subsection{Binary modulo-additive channels \label{ex:binary}}

The simplest example of a channel with an individual noise sequence is the binary modulo-additive channel. This channel takes binary inputs and produces binary outputs, where the output is produced by flipping some bits of the channel input. These flips do not depend on the channel input symbols. The output $\mathbf{y} \in \{0,1\}^N$ can be written as 
    \begin{align}
    \mathbf{y} = \mathbf{x} \oplus \mathbf{z},
    \end{align}
where $\mathbf{x} \in \{0,1\}^N$ is the channel input, $\mbf{z} \in \{0,1\}^N$ is the noise sequence, and addition is carried out modulo-$2$.  The noise $\mbf{z}$ is arbitrary but fixed, and we let $p$  be the empirical fraction of $1$'s in $\mbf{z}$, which is arbitrary but fixed over the $[0,1]$ interval. 

Because the state sequence $\mbf{z}$ is arbitrary and unknown, it is not clear how to find the highest possible rate of reliable communication.  For any \textit{fixed} $\mbf{z}$, we could say na\"{i}vely that the capacity is one bit, because the channel is deterministic.  However, $\mbf{z}$ is unknown and may, in fact, have been generated iid according to a Bernoulli distribution with parameter $p$, in which case the capacity should be no larger than $1 - \hb(p)$, namely, the capacity of a binary symmetric channel (BSC) with crossover $p$.  The algorithm in this paper guarantees a rate close to $1 - \hb(p)$ for any state sequence $\mbf{z}$ with an empirical fraction of $1$'s equal to $p$.  This rate can be thought of as the empirical mutual information of the channel with input distribution $(1/2, 1/2)$.  Since the input distribution is the same for all BSC's, the rate can also be called the \textit{empirical capacity}, as in the work of Shayevitz and Feder \cite{ShayevitzF:06empirical}.

\subsection{Z-channels with unknown crossover \label{ex:zchan}}

Whereas the example above can be thought of as an XOR operation with the channel state, in our second example, we consider a binary channel in which the output is the logical OR of the input and state. For input $x$ and noise $y$, the output is given by the following:
	\begin{align}
	y = \left\{
		\begin{array}{ll}
		x  &  z = 0 \\
		1  &  z = 1~.
		\end{array}
	\right.
	\end{align}
Again, the noise sequence $\mbf{z}$ is arbitrary but fixed.  Let $q$ denote the empirical fraction of $1$'s in $\mbf{z}$. 

The algorithm in this paper achieves rates close to those corresponding to a Z-channel with crossover probability $q$.  The channel is the average $W_{\mbf{z}}$ of $W(y | x, z_i)$ over $\mbf{z}$.   Unlike the previous examples, this channel has a capacity achieving input distribution that depends on $q$.  The algorithm proposed in this paper chooses a fixed input distribution $P$ and achieves the mutual information $I(P,W_{\mbf{z}})$ of a Z-channel with that input distribution.  This leaves open the question of how to choose $P$.  One method is to choose the $P$ that minimizes the gap between $\max_{Q} I(Q,W_{\mbf{z}}) - I(P,W_{\mbf{z}})$ over all $\mbf{z}$.  However, in many cases the uniform distribution is not a bad choice, as shown by Shulman and Feder \cite{ShulmanF:04prior}.  In our results we leave the choice of $P$ open for the designer.

\section{The channel model and coding strategy \label{sec:algorithm}}

\subsection{Notation}

Script letters will generally be used to denote sets and alphabets and boldface to denote vectors.  For a vector $\mbf{x} = (x_1, x_2, \ldots, x_n)$, we write $\mbf{x}_{i}^{j}$ for the tuple $(x_i, x_{i+1}, \ldots, x_j)$ and $\mbf{x}^j$ for the tuple $(x_1, x_2, \ldots, x_j)$.  The notation $[J]$ will be used as shorthand for the set $\{1, 2, \ldots, J\}$.  The probability distribution $\styp{\mbf{z}}$ is the type of a sequence $\mbf{z}$.  For a distribution $Q$, the set $\typ{N}{Q}$ is the set of all length $N$ sequences of type $Q$.

\subsection{Channel model and coding}

The problem we consider in this paper is that of communicating over a channel with an individual state sequence.  Let the finite sets $\mc{X}$ and $\mc{Y}$ denote the channel input and output alphabets, respectively.  The channel model we consider consists of a family of channels $\mc{W} = \{W(y | x, z) : z \in \mc{Z}\}$ indexed by a state variable in a finite set $\mc{Z}$.  For any state sequence $\mbf{z} = (z_1, z_2, \ldots, z_N)$, and output $y_i$, we assume
    \begin{align*}
    \prob(y_i | \mbf{x}^i, y^{i-1}, \mbf{z}) = W(y_i | x_i, z_i)~.
    \end{align*}
That is, the channel output depends only on the current input and state.

We consider coding for this channel using the setup shown in Figure \ref{fig:blockdiagram}.  We think of the rate-limited feedback link as a noiseless channel that can be used every $\nfb$ uses of the forward channel to send $\Bfb$ bits.  The rate of the feedback is thus $\Rfb = \Bfb/\nfb$.  To avoid integer effects, we will consider only integer values for $\nfb$ and $\Bfb$.  We assume that the encoder and decoder have access to a common random variable $G$ distributed uniformly over the unit interval $[0,1]$.  This random variable can be used to generate common randomness that is shared between the encoder and decoder.

Because the maximum capacity of this set of channels is $\capmax = \log \min\{|\mc{X}|,|\mc{Y}|\}$, we define the set of possible messages to be the set of all binary sequences $\{0,1\}^{N \capmax}$.  This message set is naturally nested -- the truncated set $\{0,1\}^{T}$ is a set of prefixes for $\{0,1\}^{N \capmax}$.  At the time of decoding, the decoder will decide on a decoding threshold $T \in \mathbb{N}$ and a message $m \in \{0,1\}^{T}$.  The threshold $T$ is itself a random variable that will depend on the state sequence $\mbf{z}$, the common randomness $G$, and the randomness in the channel.

An $(N, \nfb, \Bfb)$ \textit{coding strategy} for blocklength $N$ consists of a sequence of (possibly random) encoding functions for $i = 1, 2, \ldots, N$,
    \begin{align}
    \eta_{i} : \{0,1\}^{N \capmax} \times \{0,1\}^{\lfloor (i-1)/\nfb \rfloor \Bfb} \times [0,1] \to \mc{X}~,
    \end{align}
a sequence of (possibly random) feedback functions for $i = \nfb, 2 \nfb, \ldots$:
    \begin{align}
    \phi_{i} : \mc{Y}^{i} \times [0,1] \to \{0,1\}^{\Bfb}~,  %
    \end{align}
and a decoding function
    \begin{align}
    \psi : \mc{Y}^N \times [0,1] \to \{0, 1, \ldots, N \capmax\} \times \{0,1\}^{N \capmax}~.
    \end{align}

We say a message $\mbf{m} \in \{0,1\}^{N \capmax}$ is \emph{encoded} into a codeword $\mbf{x} \in \mc{X}^N$ if
    \begin{align}
    x_i = \eta_{i}(\mbf{m},\phi_1(y^{\nfb}, G), \ldots, \phi_{\lfloor (i-1)/\nfb \rfloor}(y^{\lfloor (i-1)/\nfb \rfloor \cdot \nfb }, G),G) \qquad \forall i \in [N]~.
    \end{align}
For an $(N, \nfb, \Bfb)$ coding strategy, let $\psi(\mbf{y},G) = (T,\mbf{\hat{m}})$.  The first output $T \in \{0, 1, \ldots, N \capmax\}$ is the \emph{decoding threshold} and $\mbf{\hat{m}}^T$ is the \emph{message estimate}.  Both of these quantities are random variables. 

For a state sequence $\mbf{z}$, the \textit{maximal error probability} of an $(N, \nfb, \Bfb)$ coding strategy, is defined as
    \begin{align}
    \schemeerr({\mbf{z}}) = \max_{\mbf{m} \in \{0,1\}^{N \capmax}}
        \prob_{G,\mc{W}} \left(
                \mbf{m}^T \ne \mbf{\hat{m}}^T
            \ \Big|\  \mbf{z}
            \right)~.
    \end{align}
where the probability is taken over the common randomness $G$ and randomness in the channel.  For a state sequence $\mbf{z}$, a \emph{rate} $R$ is said to be \emph{achievable} with probability $1 - \rateerr({\mbf{z}})$ if
        \begin{align}
        \rateerr({\mbf{z}}) &=
            \max_{\mbf{m} \in \{0,1\}^{N \capmax}}
            \prob_{G,\mc{W}} \left(
            R \geq T/N,~\mbf{m}^T \ne \mbf{\hat{m}}^T
              \ \Big|\  \mbf{z}, \mbf{m}
            \right)~.
        \end{align}
Note that we can upper bound $\rateerr({\mbf{z}})$ :
        \begin{align}
        \rateerr({\mbf{z}}) &\le \schemeerr({\mbf{z}})
                + \max_{\mbf{m} \in \{0,1\}^{N \capmax}}
                    \prob_{G,\mc{W}} \left(
                R \geq T/N
                \ \Big|\  \mbf{z}, \mbf{m}
            \right)~.
        \end{align}

Note that this channel model assumes a known finite horizon $N$, unlike the infinite horizon model of Shayevitz and Feder \cite{ShayevitzF:06empirical}.  Furthermore, the basic model assumes an unbounded amount of common randomness in the form of the real number $G$.
This point is discussed further in Section~\ref{sec:discussion}.

\subsection{Mutual information definitions \label{Sec-MIEC}}

The results in this paper are stated in terms of mutual information quantities involving time-averaged channels dependent on the individual state sequence $\mbf{z}$.  For fixed $\mbf{z}$ define the \textit{state-averaged channel} to be
    \begin{align}
        \empchan{x}{y}{\mbf{z}} = \frac{1}{N} \sum_{i=1}^{N} W(y | x, z_i)~. \label{eq:avgchan}
    \end{align}
Note that if $\mbf{z}$ and $\mbf{z}'$ have the same type, then the state-averaged channels generated by them are the same.  Define the empirical channel for a distribution $Q$ on $\mc{Z}$:
    \begin{align}
        \empchan{x}{y}{Q} = \sum_{z \in \mc{Z}} W(y | x, z) Q(z)~.
    \end{align}

For a fixed input distribution $P(x)$ on $\mc{X}$ and channel $W(y | x)$, the \textit{mutual information} is given by the usual definition:
    \begin{align*}
    \empMI{P}{W} = \sum_{x,y} W(y | x) P(x) \log \frac{W(y | x) P(x)}{ P(x) \sum_{x'} W(y | x') P(x')}~.
    \end{align*}
For an individual state sequence $\mbf{z}$ the \textit{empirical mutual information} is given by $\empMI{P}{W_{\mbf{z}}}$.

\subsection{Optimality versus empirical capacity}

We are interested in analyzing strategies that can adapt their rates depending on the state sequence, and in our analysis, we want to consider the rates achieved by a strategy as a function of the state sequence. Unlike the compound channel setting (see e.g. \cite{CsiszarKorner:82book} for definitions), which considers the worst-case behavior of a strategy over a class of channels, we instead want strategies that perform universally well over all sequences. However, this raises the problem of finding a notion of optimality that does not depend on the worst-case performance.

One possibility is to define an optimal strategy as one that, for every state sequence, achieves a rate at least as large as any other strategy for that sequence, and then define the capacity as the rates achieved by this strategy.  However, this means comparing a strategy for all sequences against all strategies tailored to a fixed sequence.  In the example in Section \ref{ex:binary}, for each $\mbf{z}$ there exists a decoding strategy which adds $\mbf{z}$ to the output, undoing all of the bit flips.  Each strategy achieves rate $1$ for the specific choice of $\mbf{z}$, but this is clearly an unreasonable target.

Instead, for each sequence we can consider a set of reference strategies and measure the ``regret'' of our strategy with respect to the reference strategies for each sequence.  We take an approach inspired by source coding for individual sequences, in which we have a benchmark rate for each state sequence and then test whether a coding strategy attains the benchmark for each state sequence.

One such benchmark that we consider in this paper is the \textit{empirical capacity} -- for a fixed $\mbf{z}$, the empirical capacity is defined as the supremum over all input distributions of the empirical mutual information:
    \begin{align*}
    \empcap{\mbf{z}} = \sup_{P(x)} \empMI{P}{W_\mbf{z}}~.
    \end{align*}
First used by Shayevitz and Feder \cite{ShayevitzF:06empirical}, 
empirical capacity is given its name not because it is purported to be optimal, but instead because of its resemblance to the capacity of a point-to-point discrete memoryless channel. 

There are two points that are worth mentioning before proceeding to describe the results in this paper.  First, it is easy to see that the empirical capacity is a weaker target than the best possible strategy for a given sequence.   It is possible that a strategy can achieve rates larger than the empirical capacity. In the example in Section \ref{ex:binary}, if the sequence $\mbf{z}$ were all $0$ for the first half and all $1$ for the second half, the empirical capacity is $0$, whereas the coding strategy presented in this paper is expected to achieve rates close to $1$.

Second, there may exist examples for which no strategy is guaranteed to achieve the empirical capacity. The coding strategy proposed in this paper uses a fixed input distribution $P$, and in general, the maximizing $P(x)$ may not be the same for all $\mbf{z}$.\footnote{A question then arises of how one chooses 
the input distribution $P$. One possibility could be 
to choose $P$ to be uniform over the input alphabet. However, depending on the setting, 
other approaches might be preferable. Inspired by the theory of AVCs, one may choose the input
distribution to be
    \begin{align}
    P = \argmax_{P'} \inf_{ Q : \empMI{P'}{W_Q} > \rateloss } \empMI{P'}{ W_{Q} }~,\label{Eq-maxminPX}
    \end{align}
where $\rateloss$ is a parameter governing the gap between the rates guaranteed by the algorithm and the empirical mutual information of the channel.  This approach can run into problems in some situations 
in which for the $P$ chosen, $\empMI{P}{ W_{Q} } = 0$ for a large subset of state distributions $Q$, but there exists a distribution $\tilde{P}$ for which $\empMI{\tilde{P}}{ W_{Q} } \geq \rateloss$ for all $Q$. On the other hand, if one were to remove the condition that $\empMI{P'}{W_Q} > \rateloss$, for the example in Section \ref{ex:binary}, $\inf_{ Q } \empMI{P'}{ W_{Q} } = 0$ for 
all choices of $P'$, and the choice of $P'$ would be arbitrary. Because of such issues, we will leave the question of how to choose the input distribution $P$ unanswered in this work. The problem of choosing $P$ is similar to that studied by Shulman and Feder \cite{ShulmanF:04prior}.}
In these cases our strategy can achieve rates close to the empirical mutual information $\empMI{P}{W_\mbf{z}}$ but not 
the empirical capacity $\empcap{\mbf{z}}$.  It may be possible to adapt $P$ over time, and finding a strategy achieving $\empcap{\mbf{z}}$ or a counterexample showing that for some channels, no strategy achieving $\empcap{\mbf{z}}$ is possible, is left for future research.

\subsection{Main result}

The main result in this paper is that the algorithm given in the next section achieves rates that asymptotically approach the mutual information $\empMI{P}{\empchanshort{\mbf{z}}}$ for a large set of state sequences $\mbf{z}$.

\begin{theorem} \label{theorem:main_result}
Let $\{W(y|x,z) : z \in \mc{Z}\}$ be a given family of channels.  Then given any $\rateloss > 0$, $\algerror > 0$, $\fbrate^\ast > 0$, and channel input distribution $P$, there exists an $N$ sufficiently large and an $(N, \nfb, \Bfb)$ coding strategy with feedback rate 
	\begin{align}
		\Rfb = \frac{\Bfb}{\nfb} < \fbrate^\ast~,
	\end{align}
such that for all $\mbf{z} \in \typ{Q}{N}$, the rate
    \begin{align}
    R \ge \empMI{P}{\empchanshort{Q}} - \rateloss~
    \label{eq:thm_rateloss}
    \end{align}
is achievable with probability $1 - \algerror$.
\end{theorem}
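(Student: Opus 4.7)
The plan is to construct and analyze the rateless chunked-training scheme sketched in Section I. Fix a chunk length $\chunk$ and a per-chunk training length $\train \ll \chunk$. Using the common randomness $G$, the encoder and decoder (i) select the $\train$ training positions in each chunk uniformly at random among the $\chunk$ positions, (ii) draw a random codebook of $2^{\rbits}$ codewords with i.i.d.\ symbols from $P$, and (iii) in each chunk the encoder transmits known pilot symbols at the training positions and codeword symbols elsewhere. At the end of chunk $r$ the decoder forms an empirical channel estimate $\hat{W}^{(r)}$ from the pilot/output pairs and accumulates $(\chunk - \train)\,\empMI{P}{\hat{W}^{(r)}}$. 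As soon as this running sum exceeds $\rbits(1+\rategap)$, the decoder signals \emph{decode} on the feedback link and runs a maximum-empirical-mutual-information decoder against the estimates; otherwise it signals \emph{continue} and the encoder transmits the next chunk. Successive rounds of $\rbits$ fresh bits fill the horizon $N$.

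The error analysis has two parts. First, because the training positions in chunk $r$ are chosen by $G$ uniformly at random, conditional on the arbitrary $\mbf{z}$ they form a uniformly random size-$\train$ sample from the chunk's state subsequence; a Hoeffding/Chernoff bound therefore gives $\|\hat{W}^{(r)} - \empchanshort{Q_r}\|_{\mathrm{TV}} < \chunkeps$ with probability $\ge 1 - \chunkeps$, and continuity of $I(P,\cdot)$ converts this to $|\empMI{P}{\hat{W}^{(r)}} - \empMI{P}{\empchanshort{Q_r}}| < \entgap$. Second, conditional on these estimates being faithful, the data positions see the complementary $G$-random sample of $\mbf{z}$ with the same empirical state type (up to $O(\train/\chunk)$ slack), and a standard random-coding argument applied to the estimated channel bounds the miscoding probability by $2^{\rbits}\cdot 2^{-\sum_r (\chunk-\train)(\empMI{P}{\hat{W}^{(r)}} - \rategap)}$, which is exponentially small once the stopping criterion fires. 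Symmetrically, the probability of stopping prematurely is bounded by the probability that the running $\empMI{P}{\hat{W}^{(r)}}$-sum exceeds the running $\empMI{P}{\empchanshort{Q_r}}$-sum by more than $\rategap \rbits/2$, also exponentially small. A union bound over chunks, decoding attempts, and the polynomially many state types $Q$ drives the total error below $\algerror$ for $N$ large.

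For the rate, summing $(\chunk - \train)\empMI{P}{\empchanshort{Q_r}}$ over the chunks of a round telescopes to $(1 - \train/\chunk)\cdot(\text{round length})\cdot\empMI{P}{\empchanshort{Q}}$, so the round terminates after at most $\rbits/(\empMI{P}{\empchanshort{Q}} - \rateloss/2)$ channel uses once $\train/\chunk$, $\rategap$, and $\entgap$ are chosen sufficiently small relative to $\rateloss$. Summing over all rounds in the horizon produces a rate at least $\empMI{P}{\empchanshort{Q}} - \rateloss$ with probability $1 - \algerror$. The feedback link carries one bit per chunk, so $\Rfb = 1/\chunk$, and choosing $\chunk > 1/\fbrate^{\ast}$ satisfies the feedback constraint.

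The main obstacle is the coupling between the \emph{arbitrary} state sequence and the \emph{random} training schedule: the training estimate must faithfully predict the channel that adjacent data positions actually experience, even though the state composition may shift adversarially from chunk to chunk. This is precisely why the training positions are chosen via the common randomness $G$ rather than placed at fixed offsets; conditioned on $\mbf{z}$, the state types seen by training and by data positions are the same random split of the chunk's state composition, so concentration of one implies concentration of the other. A secondary nuisance is that $\hat{W}^{(r)}$ may assign zero mass to outputs that the true channel produces, which would break the decoder's metric; this is dispatched by smoothing $\hat{W}^{(r)}$ slightly toward uniform (or by working with a $\ratelesserr$-style conditional-type decoder, as in the rateless constructions of \cite{Sahai:07balancing}) before using it in either the stopping rule or the decoder.
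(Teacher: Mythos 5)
Your high-level architecture matches the paper: chunked transmission, randomized pilots for channel estimation, a rateless round structure terminated by low-rate feedback, Hoeffding for training concentration, and a union bound over chunks and rounds. However, there is a genuine flaw in the stopping rule that breaks the scheme on exactly the state sequences the theorem must handle.

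You accumulate $\sum_{j\le r}(\chunk-\train)\,\empMI{P}{\hWempn{j}}$ and decode once this running sum of \emph{per-chunk} mutual informations exceeds $\rbits(1+\rategap)$. But the quantity that controls whether a single rateless codeword spanning the whole round is decodable is $\empMI{P}{\hWempr{r}}$, the mutual information of the \emph{averaged} channel over the round. Since $I(P,W)$ is convex in $W$ for fixed $P$, we have
\begin{align}
\frac{1}{r}\sum_{j=1}^{r}\empMI{P}{\hWempn{j}} \ \ge\ \empMI{P}{\hWempr{r}}~,
\end{align}
so your rule fires before the codeword has seen enough aggregate mutual information. This is not a second-order slack: consider a state sequence whose first half makes the channel pure noise and whose second half makes it noiseless. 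Your first round drags across the entire noisy half without accumulating anything, then picks up roughly $\rbits/(\chunk-\train)$ noiseless chunks and declares ``decode.'' At that moment the round-averaged channel is essentially the noisy one diluted by an $O(\rbits/N)$ noiseless sliver, whose mutual information times the round length is $O(\rbits^2/N)\ll\rbits$. The codeword cannot be recovered, the error probability is not small, and your scheme then loses the entire second half. The paper's rule decodes only when $\empMI{P}{\hWempr{n}} - \rategap > \rbits/((\chunk-\train)n)$, i.e.\ uses the MI of the accumulated channel, and---crucially---adds a third feedback message ``BAD NOISE'' (equation~\eqref{eq:badnoise_condition}) that aborts a round as soon as the running round-averaged MI drops below $\badrate+\rategap$, which both bounds round lengths (Lemma~\ref{lemma:M_upperbd}) and prevents exactly the pathology above by jettisoning the noisy stretch a chunk at a time. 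You have only ``decode''/``continue,'' so even patching the stopping rule to use the MI of the average would leave rounds over bad patches unbounded. Both fixes are needed.

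Two smaller points. First, your error bound $2^{\rbits}\cdot 2^{-\sum_r(\chunk-\train)(\empMI{P}{\hWempn{r}}-\rategap)}$ is what a threshold decoder tuned to the estimated law $\hat W$ would give on a \emph{memoryless} channel with that law; here the forward channel is an arbitrary state sequence, and you need a universal argument such as the Hughes--Thomas / MMI-decoder bound for constant-composition codes that the paper invokes (Lemma~\ref{lemma:rateless}), which requires no channel estimate in the decoder at all and sidesteps the smoothing fix you propose. Second, a rateless codeword truncated at a variable chunk boundary must have the required composition at every truncation point; the paper handles this with an expurgation-and-nesting construction, which your i.i.d.\ codebook avoids only at the cost of a chunk-level composition concentration argument you would need to supply.
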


\textit{Binary modulo-additive channels, revisited}:  For the binary additive example in Section \ref{ex:binary}, $p$ denoted the fraction of ones in the noise sequence $\mbf{z}$. Then, the empirical capacity is $1 - \hb (p)$, the capacity of the binary symmetric channel with crossover probability $p$.  Theorem \ref{theorem:main_result} implies the existence of strategies employing asymptotically zero-rate feedback such that for all $\rateloss, \algerror > 0$ and sufficiently large $N$,
    \begin{align}
    R \ge 1 - \hb (p) - \rateloss ~,
    \end{align}
is achievable with probability at least $1 - \algerror$.

\textit{Z-channels with unknown crossover, revisited}: For the example in Section \ref{ex:zchan} with $q$ equal to the fraction of $1$'s in the crossover sequence, the capacity achieving input distribution is a function of $q$, so the theorem cannot guarantee a scheme achieving the empirical capacity.  Despite this, it still provides achievable rates in this setting.  If the channel input distribution has $P(X = 1) = p_x$ for this channel, then the empirical mutual information for this channel can be written as
    \begin{align}
    \empMI{P}{W_q} = \hb(p_x) - (1 - p_x + p_xq)
            \hb\left( \frac{p_xq}{1-p_x +p_xq} \right)~,
    \end{align}
and is asymptotically achievable from Theorem \ref{theorem:main_result}.
As discussed briefly at the end of Section~\ref{Sec-MIEC},
the question of how to select $p_x$ is outside the framework of
this paper; one possibility is given in equation \eqref{Eq-maxminPX}.

\subsection{Proposed coding strategy: Randomized rateless code \label{sect:proposed_strategy}}

The achievability result in Theorem \ref{theorem:main_result} relies on the following coding strategy, which can be thought of as iterated rateless coding with randomized training (or, for short, randomized rateless code).  The overall scheme is illustrated in Figure \ref{fig:scheme}.  The scheme divides time into chunks of $\chunk(N)$ channel uses and in each round attempts to send $\rbits(N)$ using a randomized rateless code.  Each chunk contains randomly places training sequences so the decoder can estimate the empirical channel.  The decoder chooses to decode when the empirical rate falls below the estimated empirical mutual information calculated from the channel estimates.  After the $\rbits(N)$ are decoded the round ends and the encoder starts a new round to send the next $\rbits(N)$ bits.  The length of each round is variable and depends on the empirical state sequence.

We now describe each component of the scheme in more detail.

\subsubsection{Feedback}

Divide the blocklength $N$ into \textit{chunks} of length $\chunk = \chunk(N)$.  Feedback occurs at the end of chunks, so $\nfb = \chunk$ with three possible messages: ``BAD NOISE,'' ``DECODED,'' and ``KEEP GOING,'' which correspond to the feedback messages $00$, $01$, and $10$, respectively.  Thus, $\Bfb = 2$, so the feedback rate $\Rfb = \fbrate (N)$ is given by the expression
    \begin{align} \label{eq:strategy_fbrate}
        \Rfb = \frac{\Bfb}{\chunk(N)}~.
    \end{align}
If the chunk size $\chunk (N)$ goes to infinity as $N \to \infty$, the feedback rate $\fbrate (N) \to 0$.

\subsubsection{Rateless coding}

A rateless code is a variable-length coding scheme to send a fixed number of bits.  In the algorithm proposed here, the encoder attempts to send $\rbits = \rbits(N)$ bits over several chunks comprising a \textit{round}.  Rounds vary in length and terminate at the end of chunks in which the decoder feeds back either ``BAD NOISE'' or ``DECODED.''  Let $\ell_r$ denote the time index at the end of round $r$:
    \begin{align} \label{eq:roundlength_defn}
        \ell_r = \min \left\{ j = i \cdot b(N) > \ell_{r - 1} : \phi_i  = \text{``BAD NOISE'' or ``DECODED''} \right\}~,
    \end{align}
and set $\ell_0 = 0$.

An $(\mhi, \codechunk, \rbits)$ \textit{rateless code} is a sequence of maps $\{ (\mu_i,\nu_i) : i = 1, 2, \ldots \mhi \}$, where
    \begin{align} \label{eq:rateless_defn}
    \mu_i : \{0,1\}^{\rbits} &\to \mc{X}^{\codechunk} \\
    \nu_i : \mc{Y}^{i \cdot \codechunk} &\to \{0,1\}^{\rbits}~.
    \end{align}
The encoding maps $\mu_i$ produce successive chunks of a codeword for a given message, and the decoding maps attempt to decode the message based on the channel outputs.  An $(\mhi, \codechunk, \rbits)$ \textit{randomized} rateless code is a random variable that takes values in the set of $(\mhi, \codechunk, \rbits)$ rateless codes.  The \textit{maximal error probability} $\ratelesserr(M,\mbf{z}) = \ratelesserr(M,\mbf{z}, \mathcal{D})$ for a randomized rateless code $\mathcal{D}$ decoded at time $M \codechunk$ with state sequence $\mbf{z} \in \mc{Z}^{M \codechunk}$ is
    \begin{align} \label{eq:randomized_rateless_defn}
    \ratelesserr(M,\mbf{z}, \mathcal{D}) &= \max_{m \in \{0,1\}^{\rbits}}
        \expe\left[ W^{M \codechunk} \left(
                \left\{
                    \nu_M( \mbf{y}_{1}^{M \codechunk} ) \ne m
                    \right\}
                ~\Big|~
                \mu_i(m), \mbf{z}
            \right)
            \right] \\
            &= \max_{m \in \{0,1\}^{\rbits}} \varepsilon_m (M, \mbf{z}, \mc{D})~, \label{eq:randomized_rateless_msgerr_defn}
    \end{align}
where the expectation is taken over the randomness in the code.  We will suppress
dependence on $\mathcal{D}$ when it is clear from context.  The randomized rateless code used in this paper has codewords with constant composition $P(x)$ on $\mc{X}$ and uses a maximum mutual information (MMI) decoder.

\subsubsection{Training \label{sect:training}}

The coding strategy analyzed in this paper uses a randomized
rateless code in conjunction with randomly located training symbols.
The training allows the decoder to estimate the channel and choose
an appropriate decoding time.  For each chunk of $\chunk$ channel
uses, the scheme uses $\train = \train(N)$ positions for training.
Using the common randomness $G$, the encoder and decoder select
$\train$ \textit{training positions} $\itrain_{r,n}$ for the $n$-th
chunk of round $r$.\footnote{There is a slight abuse of notation 
with the type $\typ{N}{Q}$, but the double subscript in $\itrain_{r,n}$ should make the distinction  unambiguous.}  Formally, $\itrain_{r,n}$ is uniformly distributed
over subsets of $\{\ell_{r-1} + (n-1) \chunk+ 1,\ldots, \ell_{r-1} +
n \chunk\}$ of cardinality $\train$. This set is further randomly
partitioned into $|\mc{X}|$ subsets $\itrain_{r,n}(x)$ for $x \in
\mc{X}$.

\subsubsection{Encoding}

The encoder attempts to send a message $m \in \{0,1\}^{N \capmax}$ over several rounds.  In each round it attempts to send a sub-message $m_r \in \{0,1\}^{\rbits}$ consisting of $\rbits$ bits of $m$.  The submessage $m_1$ is the first $\rbits$ bits of $m$.  If the round $r-1$ ended with ``BAD NOISE'' then $m_r = m_{r-1}$, and if round $r-1$ ended with ``DECODED'' then $m_r$ is the next $\rbits$ of the message $m$.

The encoder and decoder share an $(\mhi, \chunk - t, \rbits)$ randomized rateless code.  Using the common randomness $G$, at the start of each round the encoder and decoder choose an $(\mhi, \chunk - t, \rbits)$ rateless code $\{ \mu_j, \nu_j) : j = 1, 2, \ldots \mhi \}$ according to the distribution of this randomized code.  Define the encoding map $\eta_i$ in the $n$-th chunk of the $r$-th round:
    \begin{align}
    \eta_i(m_r,G) &= x  \qquad i \in \itrain_{r,n}(x)  \label{eq:encdef1} \\
    \{ \eta_i(m_r,G) : i \notin \itrain_{r,n} \} &= \mu_n(m_r)~. \label{eq:encdef2}
    \end{align}
That is, the $n$-th chunk transmitted by the scheme is created by taking the $\chunk - t$ piece of the codeword $\mu_n(m_r)$ and inserting the $t$ randomly chosen training positions, as illustrated in Figure \ref{fig:scheme}.  The dependence of $\eta_i$ on the feedback is suppressed here because a round $r$ is terminated as soon as the feedback message is no longer ``KEEP GOING.'' 

    \begin{figure*}[thb]
    \begin{center}
\psset{unit=0.7mm}
\begin{pspicture}(-16,-19)(134,40)

\newrgbcolor{dkgray}{0.2 0.2 0.2}
\newrgbcolor{ltgray}{0.8 0.8 0.8}
\newrgbcolor{medgray}{0.5 0.5 0.5}

\rput(-8,32.5){total:}
\psframe[fillstyle=solid,fillcolor=dkgray](0,30)(30,35)
\psframe[fillstyle=solid,fillcolor=dkgray](32,30)(50,35)
\psframe[fillstyle=solid,fillcolor=dkgray](52,30)(75,35)
\psline(0,30)(85,30)
\psline(0,35)(85,35)
\psdots(77,32.5)(80,32.5)(83,32.5)
\psframe[fillstyle=solid,fillcolor=dkgray](100,30)(130,35)
\psline(90,30)(130,30)
\psline(90,35)(130,35)
\psdots(98,32.5)(95,32.5)(92,32.5)
\rput(0,38){$0$}
\rput(130,38){$N$}

\rput(15,27){\textcolor{dkgray}{\textbf{ \small round 1}}}
\rput(41,27){\textcolor{dkgray}{\textbf{ \small 2}}}
\rput(63.5,27){\textcolor{dkgray}{\textbf{ \small 3}}}

\psline[linewidth=1pt]{->}(0,30)(0,18)
\psline[linewidth=1pt]{->}(30,30)(30,23)(110,23)(110,18)

\rput(-8,12.5){round:}
\psframe[fillstyle=solid,fillcolor=medgray](0,10)(20,15)
\psframe[fillstyle=solid,fillcolor=medgray](25,10)(45,15)
\psframe[fillstyle=solid,fillcolor=medgray](50,10)(70,15)
\psdots(75,12.5)(80,12.5)(85,12.5)
\psframe[fillstyle=solid,fillcolor=medgray](90,10)(110,15)
\rput(20,18){$\chunk$} \rput(45,18){$2 \chunk$} \rput(70,18){$3 \chunk$}

\rput(10,12){\textcolor{medgray}{\textbf{\small chunk 1}}}
\rput(35,12){\textcolor{medgray}{\textbf{\small 2}}}
\rput(60,12){\textcolor{medgray}{\textbf{\small 3}}}

\psline[linewidth=1pt]{->}(0,10)(0,0)
\psline[linewidth=1pt]{->}(20,10)(20,8)(75,8)(75,0)

\rput(-8,-7.5){chunk:}
\psframe[fillstyle=solid,fillcolor=ltgray](0,-5)(75,-10)
\psframe[fillstyle=solid,fillcolor=black](10,-5)(12,-10)
\psframe[fillstyle=solid,fillcolor=black](50,-5)(52,-10)
\psframe[fillstyle=solid,fillcolor=black](4,-5)(6,-10)
\psframe[fillstyle=solid,fillcolor=black](63,-5)(65,-10)
\rput(75,-2){$\chunk$}

\rput[B](10,-18){\textbf{training}}
\psline{->}(9,-14)(5,-11)
\psline{->}(11,-14)(11,-11)

\rput[B](45,-18){\textbf{codeword}}
\psline{->}(44,-14)(31,-11)
\psline{->}(46,-14)(57.5,-11)

\end{pspicture}    \caption{After each chunk of length $b$ feedback can be sent.  Rounds end by decoding a message or declaring the noise to be bad. \label{fig:scheme}}
    \end{center}
    \end{figure*}
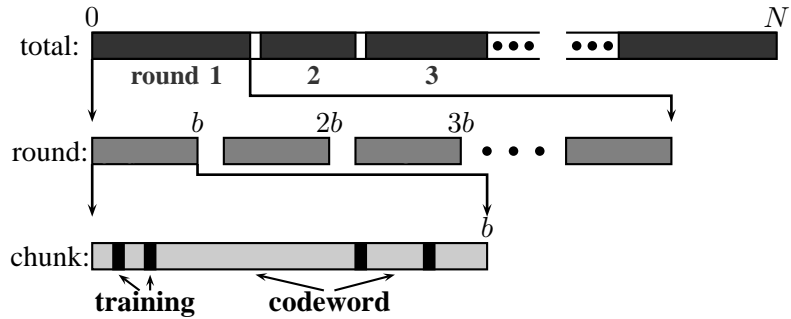

\subsubsection{Decoding}

The decoder uses the training symbols $\{y_i : i \in \itrain_{r,n}(x)\}$ to estimate the channel transition probabilities $W_{\mbf{z}}(y | x)$ and thereby obtain an estimate of the empirical mutual information $\empMI{P}{W_{\mbf{z}}}$ during the chunk and over the round.  If the estimated mutual information is too low, then it feeds back ``BAD NOISE.''  If the estimated mutual information is above the empirical rate $\rbits/(n (\chunk - \train)) + \rategap$ then it decodes the code using the MMI decoder $\nu_n$ of the rateless code and feeds back ``DECODED.'' Otherwise, it feeds back ``KEEP GOING.''  The parameter $\rategap$ ensures that with high probability the empirical rate is below the true empirical mutual information of the channel.

\subsubsection{Algorithm \label{subsec:algorithm}}

The parameters of the algorithm are a the chunk size $\chunk(N)$, training size $t(N)$, number of bits per round $\rbits$, and decoding thresholds $\rategap$ and $\badrate$.

Given an $(\mhi, \chunk - \train, \rbits)$ randomized rateless code and message bits $m_r$, the encoder and decoder first use common randomness to choose a realization of the randomized rateless code. The following steps are then repeated for each chunk in round $r$:
\begin{enumerate}
    \item Using common randomness, the encoder and decoder choose $\train = \train (N)$ positions $\itrain_{r,n}$ and a random partition of $\itrain_{r,n}$ into $|\mc{X}|$ subsets $\itrain_{r,n}(x)$ of size $t/|\mc{X}|$ for training in chunk $n$.
\item \label{alg2:step1} The encoder transmits the $n$-th chunk using the encoding map as defined in Equations (\ref{eq:encdef1})--(\ref{eq:encdef2}).  In particular, the symbol $x$ is sent during the training positions $\itrain_{r,n}(x)$.
    \item The decoder estimates the empirical channel in chunk $n$ and the empirical channel over the round so far:
        \begin{align}
            \hWempnl{x}{y}{n} &= \frac{|\mc{X}|}{\train} \cdot
                \left| \left\{ j \in \itrain_{r,n}(x) : y_j = y \right\} \right|
                \label{eq:chunk_estimate} \\
            \hWemprl{x}{y}{n} &= \frac{1}{n} \sum_{i=1}^n \hWempnl{x}{y}{i}~.
                \label{eq:round_estimate}
        \end{align}
\item The decoder makes a decision based on $\hWempr{n}$ and $n$.
    \begin{enumerate}
    \item If
        \begin{align} \label{eq:badnoise_condition}
        \empMI{P}{\hWempr{n}} - \rategap < \badrate~,
        \end{align}
        where $\badrate > 0$ is a parameter of the algorithm, then the decoder feeds back ``BAD NOISE'' and the round is terminated without decoding the $\rbits$ bits.  In the next round, the encoder will attempt to resend the $\rbits$ bits from this round.
    \item If
        \begin{align}
         \empMI{P}{\hWempr{n}} - \rategap & > \frac{\rbits}{(\chunk - \train) \times n}~,
            \label{eq:decrule}
        \end{align}
        where $\train$ is defined in Section \ref{sect:training}, then the decoder decodes, feeds back ''DECODED,'' and the encoder starts a new round.
    \item otherwise the decoder feeds back ``KEEP GOING'' and goes to
    \ref{alg2:step1}).
    \end{enumerate}
    Thus, we have that
        \begin{align}
        \phi(\mbf{y},G) = \left\{ 
			\begin{array}{cl} 
			\text{``BAD NOISE'' }, & 
				\empMI{P}{\hWempr{n}} - \rategap < \badrate,\ 
						\empMI{P}{\hWempr{n}} - \rategap 
							\leq \frac{\rbits}{(\chunk - \train) \times n} \\
        		\text{``DECODED'' }, &
             	\empMI{P}{\hWempr{n}} - \rategap 
					> \frac{\rbits}{(\chunk - \train) \times n} \\
        		\text{``KEEP GOING'' }, & 
				\text{otherwise}
        		\end{array} 
			\right.
        \label{eq:decrules}
    		\end{align}
\end{enumerate}

This strategy has two main ingredients.  First, the encoder uses random training sequences to let the decoder accurately estimate the empirical average channel.  Given this accurate estimate, the decoder can track the empirical mutual information of the channel over the round.  Second, the decoder only needs to know that the empirical rate is smaller than the empirical mutual information in order guarantee a small error probability.  

We note again that the channel model and problem formulation involve a fixed overall blocklength $N$ and other parameters of the coding strategy are defined in terms of this parameter.  However, in practice it may be more desirable to fix a number of bits $\rbits(N)$ to send per round and then define the coding parameters in terms of $\rbits$.  We have chosen the former method because it is convenient for our mathematical analysis, but we believe that in principle the problem could be formulated in an ``infinite-horizon'' manner as well.
\section{Analysis \label{sec:analysis}}

Showing that the strategy proposed in the previous section satisfies the conditions of Theorem \ref{theorem:main_result} requires some more notation.  For each round $r$, let the random variable $M(r)$ be the number of chunks in that round:
    \begin{align}
    M(r) = \inf_{n > 0} \left\{ \empMI{P}{\hWempr{n}} - \rategap < \badrate
        \ \ \text{or}\ \
        \frac{\rbits}{(\chunk - \train)n}
            < \empMI{P}{\hWempr{n}} - \rategap \right\}~.
    \label{eq:M_defn}
    \end{align}
Let $\icode_{n,r}$ denote the time indices in the $n$-th chunk of round $r$ that are not in the training set $\itrain_{n,r}$.

The scheme depends on a number of parameters -- the overall
blocklength $N$, the number of bits per round $\rbits(N)$, the chunk
size $\chunk(N)$, the number of training positions per chunk
$\train(N)$, the \textit{rate gap} $\rategap(N)$, the error bound
$\algerror$, and the feedback rate $\fbrate(N)$.  In order to make
the proof of the result clear, assume that there exist real
constants $g_1, g_2, g_3 \in (0,\frac{1}{2})$ with $g_1 > g_2 > g_3$ and set
    \begin{eqnarray} \label{eq:param_assump}
        \rbits(N) = \Theta(N^{2 g_1}), & \chunk(N) = \Theta(N^{g_2}), & \train(N) = \Theta(N^{g_3})~.
    \end{eqnarray}
In particular, this means that the ratios $\rbits(N)/N \to 0$,
$(\chunk(N))^2/\rbits(N) \to 0$, and $\train(N)/\chunk(N) \to 0$.

\subsection{Error events \label{sect:error_events}}

The scheme requires that the channel estimates $\hWempr{M(r)}$ given
in (\ref{eq:round_estimate}) be ``close'' to the channel averaged
over the non-training positions $\icode_{n,r}$ (defined after (\ref{eq:M_defn}) above), and the channel averaged over the entire round. The former guarantees that
the estimates provided by training are close enough to guarantee
that the rateless code is decodable, and the latter guarantees the
gap between the rates achieved by the scheme and the empirical
mutual information is small.   A \textit{channel estimation error}
$E_1(r)$ occurs for round $r$ if
    \begin{align}
    \left|
        \empMI{P}{\hWempr{M(r)}}
        -
        \empMI{P}{ \frac{1}{r (\chunk - \train)} \sum_{n=1}^{M(r)} \sum_{i \in \icode_{n,r}} W(y | x, z_i)}
    \right|
        >
        \frac{\rategap}{2}
    \label{eq:traincode2}
    \end{align}
or
    \begin{align}
    \left|
        \empMI{P}{\hWempr{M(r)}}
        -
        \empMI{P}{ \frac{1}{r \chunk} \sum_{n=1}^{M(r)} \sum_{i \in \icode_{n,r} \cup \itrain_{n,r}} W(y | x, z_i)}
    \right|
        > \frac{\rategap}{2}~.
        \label{eq:trainwhole2}
    \end{align}

A \textit{decoding error} $E_2(r)$ happens in round $r$ if the
rateless code selected by the encoder and decoder experiences an
error.

\subsection{Preliminaries: Bounding the length of a round}

Before preceding to identify the error events, we will provide
bounds on the length of a round. Our reasons for establishing these are
two-fold. First, if a round fails to terminate or does not result in
successful decoding, the round length should be sufficiently small
so that its impact on the overall rate should be small. Second, when
taking union bounds over chunks in a round, the round length should
be small enough to guarantee the corresponding error probabilities
are small. Moreover, it helps set the maximum length for the
randomized rateless code, defined on page
\pageref{eq:rateless_defn}. Lemma \ref{lemma:M_upperbd} provides
bounds on $M(r)$, the number of chunks in round $r$, which can be
expressed equivalently as $\ell_r/ \chunk(N)$, where $\ell_r$ is
defined in (\ref{eq:roundlength_defn}). For simplicity, we will use
$M$ to denote $M(r)$ when the round $r$ is clear from context.

\begin{figure}
\begin{center}

\psset{xunit=7cm,yunit=8cm}
\begin{pspicture*}(-0.2,0.1)(1.1,0.7)

  \psplot[plotpoints=200,algebraic=true,%
    linewidth=1pt,linecolor=blue]{0.25}{0.9}{(100)^(-x)+0.25}
  \psline[linecolor=blue](0,0.3)(0.9,0.3)
  \psline[linestyle=dashed,linecolor=red](0,0.25)(0.9,0.25)
  \psline[linestyle=dashed](0.65,0.2)(0.65,0.3)
  \rput(-0.1,0.32){$\badrate+\rategap$}
  \rput(-0.05,0.23){$\rategap$}
  \rput(0.65,0.15){$M^{\ast}$}

\psline{->}(0.65,0.55)(0.45,0.40)

\rput(0.65,0.6){empirical rate}

\rput(0,0.65){$\empMI{P}{\hWempr{M}}$} \rput(0.90,0.15){$M$}

\psaxes[linewidth=0.05,arrows=->](0,0.2)(0,0.2)(0.9,0.6)
\end{pspicture*}

\caption{\label{fig:M_finite} Curve of the empirical rate illustrating the bounds on $M$.  The upper bound $\mhi$ is given by \ref{eq:Mstardef}}
\end{center}
\end{figure}
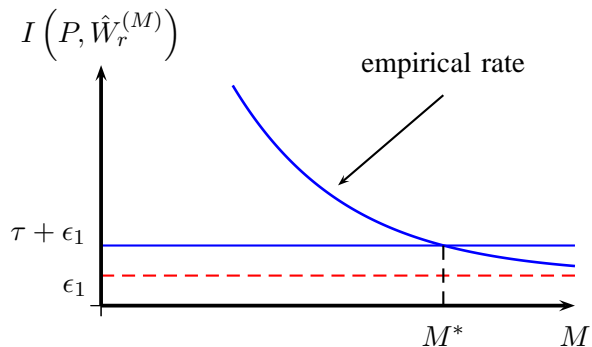

\begin{lemma}[Bounds on $M$]            \label{lemma:M_upperbd}
Fix $\rategap > 0$ and $\badrate > 0$.  Then for the scheme described in Section \ref{subsec:algorithm}, the stopping time $M$ satisfies $M \le \mhi$, where
    \begin{align}
     \mhi := \left\lceil
        \frac{\rbits (N)}{\left(\chunk (N) - \train (N) \right) \cdot
            \badrate}
    \right\rceil~. \label{eq:Mstardef}
    \end{align}
If the decoder attempted to decode, then $M \ge \mlo$, where
    \begin{align}
    \mlo = \frac{\rbits (N)}{\left(\chunk (N) - \train (N) \right) \cdot \capmax} ~.
    \end{align}
\end{lemma}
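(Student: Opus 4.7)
Both bounds follow directly by unpacking the stopping rule (\ref{eq:decrules}) that defines $M$; no probabilistic analysis is needed here since the lemma is a deterministic statement about the algorithm's decision thresholds. The plan is to handle the upper and lower bounds separately, in each case identifying the critical inequality that the decoder must see at chunk $M$.

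\textbf{Upper bound $M \le \mhi$.} The key observation is that the ``empirical rate'' $\rbits/((\chunk-\train)\,n)$ is a strictly decreasing function of $n$, while the BAD NOISE threshold $\badrate$ is constant. I would first note that, by the definition of $\mhi$ in (\ref{eq:Mstardef}),
\begin{align*}
\frac{\rbits(N)}{(\chunk(N) - \train(N))\,\mhi} \le \badrate.
\end{align*}
Then I would argue by the trichotomy of (\ref{eq:decrules}) applied at $n = \mhi$: either $\empMI{P}{\hWempr{\mhi}} - \rategap < \badrate$, in which case BAD NOISE is declared (since MI $-\rategap$ is then at most $\badrate$, which is at most the empirical rate, matching the BAD NOISE clause), or $\empMI{P}{\hWempr{\mhi}} - \rategap \ge \badrate \ge \rbits/((\chunk-\train)\mhi)$, so the DECODED clause is triggered. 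In either case the round terminates no later than chunk $\mhi$, giving $M\le \mhi$. I will flag the single degenerate equality case ($\empMI{P}{\hWempr{\mhi}}-\rategap = \badrate = \rbits/((\chunk-\train)\mhi)$) and note that it is handled by the weak inequality in the BAD NOISE clause of (\ref{eq:decrules}).

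\textbf{Lower bound $M \ge \mlo$.} Suppose the round terminates in DECODED at chunk $M$. Then by the strict inequality in the DECODED clause,
\begin{align*}
\empMI{P}{\hWempr{M}} - \rategap > \frac{\rbits(N)}{(\chunk(N)-\train(N))\,M}.
\end{align*}
Since $\hWempr{M}$ is a channel from $\mc{X}$ to $\mc{Y}$, the empirical mutual information is trivially upper bounded by $\capmax = \log\min\{|\mc{X}|,|\mc{Y}|\}$. Combining,
\begin{align*}
\capmax \ge \empMI{P}{\hWempr{M}} > \frac{\rbits(N)}{(\chunk(N)-\train(N))\,M},
\end{align*}
which rearranges to $M > \rbits(N)/((\chunk(N)-\train(N))\capmax) = \mlo$, so in particular $M\ge \mlo$.

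\textbf{Main obstacle.} There is essentially no hard step; both bounds are algebraic consequences of the stopping rule and the trivial cap $I(P,W)\le \capmax$. The only subtlety to get right is the boundary case in the upper bound, where the $<$ vs.\ $\le$ conventions of (\ref{eq:decrules}) matter. I will write the proof carefully so that every case at $n=\mhi$ lands in BAD NOISE or DECODED rather than KEEP GOING, which justifies that $\mhi$ is a genuine hard cap.
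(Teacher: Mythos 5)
Your route is the same as the paper's: bound $M$ above from the stopping rule (\ref{eq:decrules}), and bound it below from the DECODED condition together with the cardinality bound $\empMI{P}{W}\le\capmax$. The lower-bound half is correct. In the upper-bound half there are two slips. First, an internal sign inconsistency: you correctly open with $\rbits/((\chunk-\train)\mhi)\le\badrate$, so at $n=\mhi$ the threshold $\badrate$ sits \emph{above} the empirical rate; but in your BAD NOISE case you then assert that $\badrate$ ``is at most the empirical rate,'' which is the reverse. With the correct inequality, $\empMI{P}{\hWempr{\mhi}}-\rategap<\badrate$ does not by itself place the estimate at or below $\rbits/((\chunk-\train)\mhi)$ --- it may land strictly in between, in which case the DECODED clause (not BAD NOISE) fires. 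The round still ends, so $M\le\mhi$ survives, but the clean dichotomy at $n=\mhi$ should be on $\empMI{P}{\hWempr{\mhi}}-\rategap$ versus $\rbits/((\chunk-\train)\mhi)$, not versus $\badrate$. Second, the knife-edge case $\empMI{P}{\hWempr{\mhi}}-\rategap=\badrate=\rbits/((\chunk-\train)\mhi)$ is \emph{not} handled by ``the weak inequality in the BAD NOISE clause'': BAD NOISE in (\ref{eq:decrules}) requires the strict inequality $\empMI{P}{\hWempr{n}}-\rategap<\badrate$, and DECODED requires strict $>$, so under exact equality the rule returns KEEP GOING and $M=\mhi+1$. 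That off-by-one (which needs $\rbits/((\chunk-\train)\badrate)$ to be an exact integer and the estimate to land exactly on $\badrate$) is a small glitch already present in the lemma as stated and is glossed over by the paper's one-line proof as well, so you are in good company; but the particular mechanism you cite for resolving it does not apply.
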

\begin{proof}
The argument is illustrated in Figure \ref{fig:M_finite}.  The empirical rate given by (\ref{eq:decrule}) is shown in the curve.  The empirical rate $\frac{\rbits}{(\chunk - \train) \times M}$ decreases monotonically with $M$.  In order for the algorithm to continue at time $M$, from (\ref{eq:decrules}) we must have $\frac{\rbits}{(\chunk - \train) \times M} \ge \empMI{P}{\hWempr{n}} - \rategap \ge \badrate$.  Rearranging shows that $M$ must be less than $\mhi$ in (\ref{eq:Mstardef}).  The lower bound is trivial from the definition in (\ref{eq:decrule}) and the cardinality bound on mutual information.
\end{proof}

\subsection{Channel estimation for a single round}

In this section, we provide an upper bound on the error event
$E_1(r)$. The argument relies on the following observation: if
sufficiently many samples are collected to estimate the channel,
these estimates converge to the overall average channel. Lemmas
\ref{lemma:simp_chan_est} and \ref{lemma:chan_est} make this
precise. That is, with a modest number of randomly chosen training
symbols, the decoder can estimate the empirical mutual information
of the channel such that the probability of the channel estimation
error event $E_1(r)$ is small.

\begin{lemma}[Simple channel estimation]  \label{lemma:simp_chan_est}
Recall the chunk training estimates defined in
(\ref{eq:chunk_estimate}), and let parameters satisfy the conditions
in (\ref{eq:param_assump}). Then for any $\chunkeps
> 0$ there exists an $N$ sufficiently
large and constant $a_1$ such that for the $j$-th chunk the training
estimates satisfy:
    \begin{align}
    \prob\left( \left|
        \hWempnl{x}{y}{j}
        -
        W_{\mbf{z}(\icode_{r,j} \cup \itrain_{r,j})}(y|x)
        \right|
        \ge
        \chunkeps \ \forall~x,y
        \right)
        &\le
        \exp \left( -a_1 \chunkeps^2 \train \right) \\
    \prob\left( \left|
        \hWempnl{x}{y}{j}
        -
        W_{\mbf{z}(\icode_{r,j})}(y|x)
        \right|
        \ge
        \chunkeps \ \forall~x,y
        \right)
        &\le
        \exp \left( -a_1 \chunkeps^2 \train \right)~,
    \end{align}
where $t$ is the size of the training set $\itrain_{r, j}$.
\end{lemma}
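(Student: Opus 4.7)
The plan is to split the randomness into two independent layers and control each with a Hoeffding-type bound, then combine via the triangle inequality.

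First I would condition on the random training set $\itrain_{r,j}(x)$ of size $N_x := t/|\mc{X}|$. Given this set, the outputs at those positions are conditionally independent with $y_i \sim W(\cdot \,|\, x, z_i)$, so $\hWempnl{x}{y}{j}$ is an average of $N_x$ independent $[0,1]$-valued indicators $\mathbf{1}[y_i = y]$ with conditional mean
\begin{align*}
\bar W_x := \frac{1}{N_x} \sum_{i \in \itrain_{r,j}(x)} W(y \,|\, x, z_i).
\end{align*}
Hoeffding's inequality then yields a conditional tail bound of the form $2 \exp(-c N_x \chunkeps^2)$ for the event $|\hWempnl{x}{y}{j} - \bar W_x| > \chunkeps/2$.

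Next I would handle the randomness of $\itrain_{r,j}(x)$ itself. Since this is a uniformly chosen $N_x$-subset (without replacement) of the $b$ chunk positions, $\bar W_x$ is the empirical mean of a without-replacement sample from the deterministic population $\{W(y \,|\, x, z_i) : i \in \text{chunk}\} \subset [0,1]$ of size $b$, whose population mean is exactly $W_{\mbf{z}(\icode_{r,j} \cup \itrain_{r,j})}(y \,|\, x)$. Serfling's inequality (or the Hoeffding bound for sampling without replacement) gives a tail of the same form $2 \exp(-c N_x \chunkeps^2)$ on $|\bar W_x - W_{\mbf{z}(\icode_{r,j} \cup \itrain_{r,j})}(y\,|\,x)| > \chunkeps/2$.

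Combining the two layers by the triangle inequality, taking a union bound over the $|\mc{X}||\mc{Y}|$ pairs $(x,y)$, and absorbing the alphabet-size factor into the exponent (noting that $N_x = t/|\mc{X}|$ and so the exponent is proportional to $\chunkeps^2 \train$) yields the first claim. For the second claim, I would observe that the chunk average and the non-training average differ by at most
\begin{align*}
\left| W_{\mbf{z}(\icode_{r,j} \cup \itrain_{r,j})}(y\,|\,x) - W_{\mbf{z}(\icode_{r,j})}(y\,|\,x)\right| \le \frac{2 \train}{\chunk - \train},
\end{align*}
which is $o(1)$ since $\train(N)/\chunk(N) \to 0$ under the parameter scaling in (\ref{eq:param_assump}); for $N$ sufficiently large this slack is below $\chunkeps/2$ and can be absorbed, reducing the second claim to the first. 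The main technical point is step two — selecting an appropriate concentration inequality for the without-replacement sample — but standard Serfling or Hoeffding-type bounds apply directly since the $W(y\,|\,x,z_i)$ lie in $[0,1]$; everything else is bookkeeping with the triangle inequality and union bounds, and the scaling $\train = \Theta(N^{g_3}) \to \infty$ ensures the exponent drives the probability to zero.
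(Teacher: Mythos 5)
Your proof follows essentially the same route as the paper's: one Hoeffding bound for the channel randomness given the training positions, a second Hoeffding/Serfling-type bound for the without-replacement sampling of the training positions, a triangle inequality to combine them, a union bound over $(x,y)$, and a deterministic $O(\train/\chunk)$ bound to relate the whole-chunk average channel to the non-training average channel and thereby transfer the first claim to the second. The only differences are cosmetic (you state the deterministic gap as $2\train/(\chunk-\train)$ rather than the paper's $2\train/\chunk$, and you foreground the conditioning-on-the-training-set step, which the paper leaves implicit), so the proposal is correct and matches the paper's argument.
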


\begin{proof}
Proving the claim requires two applications of Hoeffding's
inequality \cite{Hoeffding:63inequality} to the training data. The
first uses the sampling with replacement version of the
inequality to show that the training estimates are close to the
state-averaged channel at those training positions. The second uses
the sampling without replacement version to show that the
state-averaged channel in the training positions is close to the
state-averaged channel over the entire chunk. An application of the
triangle inequality and our parameter assumptions in
(\ref{eq:param_assump}) complete the argument.

We now make this precise. First consider the random variables $\{
\mbf{1}(y_i = y) : i \in \itrain_{r,j}(x)\}$ for each $x$ and $y$.
Their expectations over the channel are $\{ W(y | x, z_i) : i \in
\itrain_{r,j}(x)\}$. Applying Hoeffding's inequality to these
variables shows that their mean, which is $\hWempnl{x}{y}{j}$, is
close to $W_{\mbf{z}(\itrain_{r,j}(x))}$, the average channel during
the training:
    \begin{align} \label{eq:SimpChanEst_Hoeffding_with}
    \prob\left( \left|
        \hWempnl{x}{y}{j}
        -
        W_{\mbf{z}(\icode_{r,j} \cup \itrain_{r,j})}(y|x)
        \right|
        \ge
        \traineps
        \right)
        \le
        2 \exp \left( - 2 \frac{\train}{|\mc{X}|} \traineps^2 \right)~.
    \end{align}

Now, recall that the training positions $\itrain_{r,n}$, defined on
page \pageref{sect:training}, are sampled uniformly without
replacement from the whole chunk, so the average channel
$W_{\mbf{z}(\itrain_{r,j}(x)}(y|x)$ is itself a random variable
formed by averaging the random variable $\{ W(y | x, z_i) : i \in
\itrain_{r,j}(x) \}$.  The mean of each of these variables is
$W_{\mbf{z}(\icode_{r,j} \cup \itrain_{r,j})}$, the state averaged
channel over the whole chunk.  For sampling without replacement,
another result of Hoeffding \cite[Theorem 4]{Hoeffding:63inequality}
states that the same exponential inequalities for sampling with
replacement hold, so the channel during the training is a good
approximation to the entire channel during the chunk:
    \begin{align} \label{eq:SimpChanEst_Hoeffding_wo}
    \prob\left( \left|
        W_{\mbf{z}(\itrain_{r,j}(x))}
        -
        W_{\mbf{z}(\icode_{r,j} \cup \itrain_{r,j})}
        \right|
        \ge
        \traineps
        \right)
        \le
        2 \exp \left( - 2 \frac{\train}{|\mc{X}|} \traineps^2 \right)~.
    \end{align}
By applying the triangle inequality to equations
(\ref{eq:SimpChanEst_Hoeffding_with}) and
(\ref{eq:SimpChanEst_Hoeffding_wo}), we have the following:
        \begin{align} \label{eq:SimpChanEst_Hoeffding_combo}
    \prob\left( \left|
        \hWempnl{x}{y}{j}
        -
        W_{\mbf{z}(\icode_{r,j} \cup \itrain_{r,j})}
        \right|
        \ge
        2 \traineps
        \right)
        \le
        4 \exp \left( - 2 \frac{\train}{|\mc{X}|} \traineps^2 \right)~.
    \end{align}
Finally, observe the following:
    \begin{align}
    \left| W_{\mbf{z}(\icode_{r,j} \cup \itrain_{r,j})}
        - W_{\mbf{z}(\icode_{r,j})} \right|
    &= \left| \frac{1}{\chunk} \sum_{i \in \icode_{r,j} \cup \itrain_{r,j}} W(y | x, z_i)
            - \frac{1}{\chunk-\train} \sum_{i \in \icode_{r,j}} W(y | x, z_i) \right| \\
    &= \left| \frac{1}{\chunk} \sum_{i \in \itrain_{r,j}} W(y | x, z_i)
            - \frac{ \train}{ \chunk ( \chunk - \train)}
                    \sum_{i \in \icode_{r,j}} W(y | x, z_i) \right| \\
    &\le 2 \frac{\train}{\chunk} \label{eq:SimpChanEst_params}~.
    \end{align}
The assumptions in (\ref{eq:param_assump}) imply that
(\ref{eq:SimpChanEst_params}) can be made small for sufficiently
large $N$. Thus for $N$ sufficiently large, another application of
the triangle inequality to (\ref{eq:SimpChanEst_Hoeffding_combo})
and (\ref{eq:SimpChanEst_params}) gives the following:
    \begin{align}
    \prob\left( \left|
        \hWempnl{x}{y}{j}
        -
        W_{\mbf{z}(\icode_{r,j})}
        \right|
        \ge
        3 \traineps
        \right)
        \le
        4 \exp \left( - 2 \frac{\train}{|\mc{X}|} \traineps^2 \right)~.
    \end{align}
Choosing $\chunkeps = 3 \traineps$ and a union bound over all $x \in \mc{X}$ and $y \in \mc{Y}$ we get
	\begin{align}
    \prob\left( \left|
        \hWempnl{x}{y}{j}
        -
        W_{\mbf{z}(\icode_{r,j})}
        \right|
        \ge
        \chunkeps
        \right)
        & \le
        \exp \left( - (2/9) \frac{\train}{|\mc{X}|} \chunkeps^2 +  \log |\mc{X}| |\mc{Y}| + \log 4 \right)~. \\
    &\le
    \exp \left( - \alpha_1 \chunkeps^2 \train \right)~,
    \end{align}
where the last inequality follows from taking $N$ sufficiently large and the fact that $\train(N)$ increases with $N$. 
\end{proof}

\begin{lemma}[Channel estimation]           \label{lemma:chan_est}
Recall the error event $E_1(r)$ defined on page
\pageref{sect:error_events}, and let the parameters satisfy the
conditions in (\ref{eq:param_assump}). Then for any $\rategap
> 0$ there exists $N$ sufficiently large and an $a_2 > 0$ such
that for any round $r$ and any state sequence $\mbf{z} \in
\mc{Z}^{M(r) \chunk}$,
    \begin{align}
    \prob\left(
        \left|
        \empMI{P}{\hWempr{M(r)}}
        -
        \empMI{P}{ \frac{1}{r (\chunk - \train)} \sum_{n=1}^{M(r)} \sum_{i \in \icode_{n,r}} W(y | x, z_i)}
    \right|
        >
        \frac{\rategap}{2}
    \right)
        &\le \exp\left( -a_2 \train \right)
    \label{eq:sample_codeMI}
        \\
    \prob\left(
        \left|
        \empMI{P}{\hWempr{M(r)}}
        -
        \empMI{P}{ \frac{1}{r \chunk} \sum_{n=1}^{M(r)} \sum_{i \in \icode_{n,r} \cup \itrain_{n,r}} W(y | x, z_i)}
    \right|
        > \frac{\rategap}{2}
    \right)
        &\le \exp\left( -a_2 \train \right)~.
    \label{eq:sample_roundMI}
    \end{align}
Therefore $\prob(E_1(r)) \le 2 \exp( -a_2 \train)$.
\end{lemma}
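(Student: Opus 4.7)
My plan is to bootstrap Lemma \ref{lemma:simp_chan_est} from individual chunks up to the whole round via the triangle inequality, convert closeness of channels into closeness of mutual informations using the continuity of the map $W \mapsto \empMI{P}{W}$, and handle the random stopping time $M(r)$ with the deterministic truncation from Lemma \ref{lemma:M_upperbd}.

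First, for each fixed chunk index $n$, Lemma \ref{lemma:simp_chan_est} ensures that with probability at least $1 - \exp(-a_1 \chunkeps^2 \train)$ the entries of $\hWempnl{x}{y}{n}$ are within $\chunkeps$ of both $W_{\mbf{z}(\icode_{r,n})}(y|x)$ and $W_{\mbf{z}(\icode_{r,n} \cup \itrain_{r,n})}(y|x)$ for every $(x,y)$. Since $\hWemprl{x}{y}{M(r)}$ is by definition the chunk-average $\frac{1}{M(r)}\sum_{n=1}^{M(r)} \hWempnl{x}{y}{n}$, applying the triangle inequality chunk-by-chunk immediately gives that on the intersection of the chunk-level good events, $\hWempr{M(r)}$ is entry-wise within $\chunkeps$ of both averaged channels appearing on the right-hand sides of (\ref{eq:traincode2}) and (\ref{eq:trainwhole2}).

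Next, I would invoke the standard modulus of continuity of entropy on a finite alphabet -- namely $|H(P) - H(Q)| \le \theta \log(|\mc{Y}|/\theta)$ whenever $\|P - Q\|_1 \le \theta \le 1/2$ -- applied both to the output distribution $P \cdot W$ and to each conditional distribution $W(\cdot|x)$. This yields a uniform modulus $\delta(\chunkeps)$ depending only on $|\mc{X}|$ and $|\mc{Y}|$, with $\delta(\chunkeps) \to 0$ as $\chunkeps \to 0$, such that $\max_{x,y} |W(y|x) - W'(y|x)| \le \chunkeps$ implies $|\empMI{P}{W} - \empMI{P}{W'}| \le \delta(\chunkeps)$. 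Choosing $\chunkeps$ small enough that $\delta(\chunkeps) < \rategap/2$ converts each of the entry-wise channel bounds above into the desired mutual-information bound of $\rategap/2$.

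Finally, to handle the random stopping time I would union bound over chunks $n = 1, \ldots, \mhi$, where the deterministic cap $\mhi$ from Lemma \ref{lemma:M_upperbd} grows only polynomially in $N$ under the parameter assumptions (\ref{eq:param_assump}). This produces a failure probability of at most $2 \mhi \exp(-a_1 \chunkeps^2 \train)$ for each of the two sub-events of $E_1(r)$. Because $\train = \Theta(N^{g_3})$ grows polynomially while $\log \mhi = O(\log N)$, the $\log \mhi$ term is absorbed into the exponent for $N$ sufficiently large, yielding $\exp(-a_2 \train)$ for a suitable $a_2 > 0$ slightly smaller than $a_1 \chunkeps^2$. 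Summing the two sub-event probabilities then gives $\prob(E_1(r)) \le 2 \exp(-a_2 \train)$. The main subtlety is the continuity-of-mutual-information step: it must be uniform in the underlying channel, so naive Lipschitz bounds on $x \log x$ (which blow up near zero) are inadequate, and the entropy-continuity inequality above is what makes the argument go through cleanly on a finite alphabet.
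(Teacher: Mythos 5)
Your proposal is correct and follows essentially the same route as the paper's proof: bootstrap the per-chunk closeness guarantee of Lemma \ref{lemma:simp_chan_est} to the round average by the triangle inequality, convert channel closeness to mutual-information closeness via a finite-alphabet continuity bound, and union-bound over the at most $\mhi$ chunks, absorbing the polynomial factor $\mhi$ into the exponent using \eqref{eq:param_assump}. The only cosmetic difference is that the paper invokes its own appendix result (Lemma \ref{lemma:var_bnd_mi}, derived from a concavity argument) for the continuity step, whereas you use the standard $|H(P)-H(Q)|\le \theta\log(|\mc{Y}|/\theta)$ bound; both yield a uniform modulus that makes the argument go through.
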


\begin{proof}
For all $(x,y)$, Lemma \ref{lemma:simp_chan_est} guarantees that for any $\chunkeps > 0$ the channel estimated during the training of any chunk is within $\chunkeps$ of the average channel during the whole chunk and during the codeword positions with probability $\exp \left( -a_1 \chunkeps^2 \train \right)$.  For a round of length $M(r)$, a union bound over chunks shows that
    \begin{align}
    \prob\left( \left|
        \frac{1}{M(r)} \sum_{j=1}^{M(r)} \hWempnl{x}{y}{j}
        -
        \frac{1}{M(r)} \sum_{j=1}^{M(r)} W_{\mbf{z}(\icode_{r,j} \cup \itrain_{r,j})}(y|x)
        \right|
        \ge
        \chunkeps \ \forall~x,y
        \right)
        &\le
        M(r) \exp \left( -a_1 \chunkeps^2 \train \right)
        \label{eq:sample_round} \\
    \prob\left( \left|
        \frac{1}{M(r)} \sum_{j=1}^{M(r)} \hWempnl{x}{y}{j}
        -
        \frac{1}{M(r)} \sum_{j=1}^{M(r)} W_{\mbf{z}(\icode_{r,j})}(y|x)
        \right|
        \ge
        \chunkeps \ \forall~x,y
        \right)
        &\le
         M(r)  \exp \left( -a_1 \chunkeps^2 \train \right)~.
        \label{eq:sample_code}
    \end{align}
Since $M(r)$ is at most $\mhi$, for $N$ sufficiently large the effect of the union bound is negligible.

The remainder of the proof is to show that if the channel estimated from the training is close with high probablity to both the average channel during the codeword positions and the average channel during the whole round, then the empirical mutual informations must be close as well.  Lemma \ref{lemma:var_bnd_mi} in the Appendix shows exactly this.  For any $\rategap > 0$ there exists a $\chunkeps > 0$ and $N$ sufficiently large such if the events in (\ref{eq:sample_round}) and (\ref{eq:sample_code}) fail to hold then the events in (\ref{eq:sample_roundMI}) and (\ref{eq:sample_codeMI}) also fail to hold.  This completes the proof.
\end{proof}

\textbf{Remark:} Under the parameter assumptions in equation (\ref{eq:param_assump}), the number of bits of common randomness needed in Lemmas \ref{lemma:simp_chan_est} and \ref{lemma:chan_est} to specify the training positions is sublinear in the blocklength $N$. Note that a similar conclusion was reached 
by Shayevitz and Feder for their scheme, which also uses training positions to the estimate the channel 
 \cite{ShayevitzF:06empirical}. This point is discussed in more detail in Section \ref{sec:discussion} 
 on page \pageref{discussion:commonrandomness}.

\subsection{Rateless coding}

The last ingredient in our strategy is the rateless code used during
each round.  The key property we need is that if the empirical rate
drops below the empirical mutual information of the channel, then the code
can be decoded with small probability of error.

\begin{lemma}[Rateless codes]  \label{lemma:rateless}
For any $\delta' > 0$ and distribution $P$, there exists an integer $\codechunk$ sufficiently large, $\rlesseps > 0$ and an $(\mhi, \codechunk, \rbits)$ randomized rateless code
defined in Section~\ref{sect:proposed_strategy}
such that if at decoding time $M$ the state sequence $\mbf{z}_{1}^{M \codechunk}$ satisfies
	\begin{align}
	\frac{k}{M \codechunk} \le \empMI{P}{W_{\mbf{z}_{1}^{M \codechunk}}} -  \delta'~,
	\end{align}
then its maximal error $\ratelesserr(M,\mbf{z})$, defined in (\ref{eq:randomized_rateless_defn}), satisfies
	\begin{align}
	\ratelesserr(M,\mbf{z}) < \exp( - M \codechunk \rlesseps )~.
	\end{align}
\end{lemma}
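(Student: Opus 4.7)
My plan is to construct the randomized rateless code by pairing constant-composition random coding with a maximum-mutual-information (MMI) decoder, and then to bound its maximal error via the method of types. Specifically, for each $m \in \{0,1\}^{\rbits}$, draw a codeword $\mbf{x}(m)$ independently and uniformly from the type class $\typ{\mhi\codechunk}{P}$ (or the nearest type to $P$ of that length), and define $\mu_i(m)$ to be the $i$-th chunk of length $\codechunk$ of $\mbf{x}(m)$. The decoder $\nu_M$ returns $\argmax_{m'}\empMI{\mbf{x}^{M\codechunk}(m')}{\mbf{y}^{M\codechunk}}$, computed from the joint empirical type of inputs and outputs.

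\textbf{Error analysis.} Fix a message $m$, a decoding time $M \in [\mlo,\mhi]$, and a state sequence $\mbf{z}$ with $\rbits/(M\codechunk) \le \empMI{P}{\empchanshort{\mbf{z}}} - \delta'$. The first step is to show that, when $m$ is sent, the joint empirical distribution of $(\mbf{x}^{M\codechunk}(m),\mbf{y}^{M\codechunk})$ is within $L_1$-distance $O(\delta')$ of $P(x)\empchan{x}{y}{\mbf{z}}$ with probability $1-\exp(-\Omega(M\codechunk))$. This uses two Hoeffding-type arguments in the spirit of Lemma~\ref{lemma:simp_chan_est}: one for the without-replacement assignment of each input symbol to codeword positions (giving concentration of the symbol-conditioned output frequencies around the time-averaged channel $\empchanshort{\mbf{z}}$), and one for the conditionally-independent channel outputs given $\mbf{x}^{M\codechunk}(m)$. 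Uniform continuity of mutual information on the simplex then yields $\empMI{\mbf{x}^{M\codechunk}(m)}{\mbf{y}^{M\codechunk}} \ge \empMI{P}{\empchanshort{\mbf{z}}} - \delta'/4$ with the same probability.

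For any incorrect $m'\ne m$, $\mbf{x}^{M\codechunk}(m')$ has composition $P$ and is independent of $\mbf{y}^{M\codechunk}$, so the classical Csisz\'ar--K\"orner type-counting bound gives
\[
\prob\!\left(\empMI{\mbf{x}^{M\codechunk}(m')}{\mbf{y}^{M\codechunk}} \ge \rho \,\Big|\, \mbf{y}^{M\codechunk}\right) \le (M\codechunk+1)^{|\mc{X}||\mc{Y}|}\,2^{-M\codechunk\,\rho}~.
\]
Setting $\rho = \empMI{P}{\empchanshort{\mbf{z}}} - \delta'/2 \ge \rbits/(M\codechunk) + \delta'/2$ and union-bounding over the $2^{\rbits}$ incorrect messages yields a total error bound of $(M\codechunk+1)^{|\mc{X}||\mc{Y}|}\,2^{-M\codechunk\,\delta'/2} + \exp(-\Omega(M\codechunk))$. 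For $\codechunk$ sufficiently large the polynomial prefactor is absorbed and the bound becomes $\exp(-M\codechunk\,\rlesseps)$ with, say, $\rlesseps = \delta'(\ln 2)/4$. A further union bound over the $\mhi-\mlo+1 = \mathrm{poly}(N)$ candidate decoding times costs only a polynomial factor, which is again absorbed.

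\textbf{Main obstacle.} The hard part is the first concentration step: because $\mbf{z}$ is an individual, arbitrary (non-i.i.d.) sequence, the joint distribution of $(\mbf{x}^{M\codechunk}(m),\mbf{y}^{M\codechunk})$ does not come from an i.i.d.\ channel and the textbook method-of-types argument does not apply verbatim. The resolution exploits the permutation-invariance of the uniform draw from $\typ{M\codechunk}{P}$: averaging over the position assignments converts the position-dependent family $\{W(\cdot|\cdot,z_i)\}$ into the time-averaged channel $\empchanshort{\mbf{z}}$, after which Hoeffding's inequality (with and without replacement, exactly as in Lemma~\ref{lemma:simp_chan_est}) controls the joint type. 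Everything else is a standard MMI/random-coding computation.
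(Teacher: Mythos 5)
Your proposal takes a genuinely different route from the paper's proof, and it is essentially correct, though it glosses over one point that the paper handles explicitly.

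The paper's argument proceeds indirectly: it starts from a constant-composition random code of blocklength $\mhi\codechunk$ drawn uniformly from $\typ{\mhi\codechunk}{P}$, invokes a theorem of Hughes and Thomas to get a universal error-exponent bound over all state types $Q$, then \emph{expurgates} to retain only codewords whose type is exactly $P$ in every length-$\codechunk$ chunk, identifies the distribution of the surviving codebook with one drawn directly from $\{\typ{\codechunk}{P}\}^M$ (the ``nesting'' step, which is what makes prefixes of the codebook valid shorter-blocklength codebooks), and finally converts from average to maximal error via the standard AVC message-permutation argument. Your proof instead draws codewords once from the full-length type class, truncates, applies the correct-codeword concentration argument (two-stage Hoeffding, with and without replacement, just as in Lemma~\ref{lemma:simp_chan_est}) plus uniform continuity of mutual information, bounds incorrect-codeword events directly by the Csisz\'ar--K\"orner type-counting inequality, and union bounds over messages and the $\mathrm{poly}(N)$ candidate decoding times. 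You also correctly observe (implicitly) that because the per-message error is identical in distribution under i.i.d.\ codeword draws, there is no separate average-to-maximal step needed --- a simplification over the paper's route.

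The one assertion in your writeup that is not literally true is that ``$\mbf{x}^{M\codechunk}(m')$ has composition $P$.'' The prefix of a codeword drawn uniformly from $\typ{\mhi\codechunk}{P}$ is \emph{not} constant composition, and its type fluctuates; this is precisely the reason the paper goes to the trouble of expurgating down to piecewise type-$P$ codewords. Your approach is saved because (a) the type-counting bound you invoke for an incorrect codeword, $\prob(\hat{I} \ge \rho \mid \mbf{y}) \le (M\codechunk+1)^{|\mc{X}||\mc{Y}|} 2^{-M\codechunk\rho}$, holds uniformly over the \emph{actual} composition $P'$ of the truncated codeword, so conditioning on $P'$ and averaging recovers the same bound; and (b) for the correct codeword you already argue (via Hoeffding without replacement) that the truncated type concentrates around $P$, which is what you need for the continuity step. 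To close the argument cleanly you should either make (a) explicit --- condition on the prefix type and note the bound is type-agnostic --- or, more simply, adopt the paper's device of drawing each chunk i.i.d.\ uniform on $\typ{\codechunk}{P}$ so that every prefix has exact type $P$ by construction. As written, the ``has composition $P$'' claim is a gap in rigor, but not in the overall idea.
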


\begin{proof}
Fix $\delta'$ and a distribution $P$. We can approximate $P$ arbitrarily closely with a type of a sufficiently large denominator, so without loss of generality, we assume $P$ is a type and choose $\codechunk$ to be large enough so that the denominator of type $P$ divides $\codechunk$.   Let $\mc{C}_M(J)$ be a randomized rateless code. Specifically, $\mc{C}_M(J)$ is a random variable distributed on the set of rateless codes of blocklength $M \codechunk$ whose $J$ codewords are drawn independently and uniformly from the composition-$P$ set $\typ{M\codechunk}{P}$ and with a maximum mutual information (MMI) decoder. The remainder of the proof can be sketched as follows: we verify that 
the codebook $\mc{C}_M(J)$ has satisfactory error performance under the assumptions of this Lemma. Then, we construct a codebook $\mc{D}_M(K)$ by keeping only those codewords in $\mc{C}_M(J)$ whose composition is $P$ in each chunk of $\codechunk$ symbols.  We then show that the distribution of $\mc{D}_M(K)$ is the same as that of a codebook $\mc{E}_{\mhi}(K)$ truncated to blocklength $M \codechunk$.

\textbf{Codebook properties.}  Before proceeding to construct $\mc{D}_M(K)$, we first examine properties of the constant-composition codebook $\mc{C}_M(J)$ of composition $P$. Recall the definition of maximal error for randomized rateless codes in  (\ref{eq:randomized_rateless_defn}) and (\ref{eq:randomized_rateless_msgerr_defn}). A result of Hughes and Thomas \cite[Theorem 1]{HughesT:96exponent} shows that for sufficiently large $M\codechunk$, there exists a function $E_r$ such that for all $J > 0$, $\delta > 0$, and distribution $Q$ on $\mc{Z}$,
	\begin{align}
	\max_{\mbf{z} \in \typ{M \codechunk}{Q}} \max_{j \in [J]} \varepsilon_j(M, \mbf{z}, \mc{C}_M(J))
		&\le \exp \left( - M \codechunk \left[ E_r((M \codechunk)^{-1} \log J + \delta,W,P,Q) - \delta \right] \right)
		\label{eq:ht_bound} \\
		E_r((M \codechunk)^{-1} \log J + \delta,W,P,Q) &\geq \max \left\{ 0, ~\empMI{P}{W_Q} - \delta -\frac{1}{M \codechunk} \log J \right\} ~.
	\end{align}
Fix $\cbookeps = \frac{\delta'}{4}$ and let $\mc{Q}(M)$ be the set of all $Q$ such that 
	\begin{align}
	0 < \frac{\delta'}{4} \leq \empMI{P}{W_Q} - 2 \delta - \frac{1}{M \codechunk} \log J ~.
	\label{eq:avcratebound}
	\end{align}
If $Q \in \mc{Q}(M)$, then we can rewrite the bound in (\ref{eq:ht_bound}) as follows:
	\begin{align}
	\max_{\mbf{z} \in \typ{Mn}{Q} : Q \in \mc{Q}(M)} \max_{j \in [J]} \varepsilon_j(M, \mbf{z}, \mc{C}_M(J))
	\le 
	\exp \left( - M \codechunk \cbookeps \right)~.
	\end{align}
In particular, this gives the following bound on the expectation over $\mc{C}_M(J)$ of the \textit{average error}:
	\begin{align}
	\max_{\mbf{z} \in \typ{Mn}{Q} : Q \in \mc{Q}(M)} 
		\expe_{\mc{C}_M(J)} \left[
			\frac{1}{J} \sum_{j=1}^{J}  \varepsilon_j(M, \mbf{z}, \mc{C}_M(J))
		\right]
	\le
	\exp \left( - M \codechunk \cbookeps \right)~.
	\end{align}

Use Markov's inequality to bound the probability that the average error exceeds a given value $\alpha_1$:
	\begin{align}
	\max_{\mbf{z} \in \typ{Mn}{Q} : Q \in \mc{Q}(M)} 
	\prob_{\mc{C}_M(J)}\left( \frac{1}{J} \sum_{j=1}^{J} \varepsilon_j(M, \mbf{z}, \mc{C}_M(J)) 
	\ge \alpha_1(\codechunk ,M)
	\right)
	\le 
	\frac{ \exp \left( - M \codechunk \cbookeps \right) }{ \alpha_1(\codechunk ,M) }~.
	\end{align}
This establishes that for any $\delta > 0$ the codebook has average error no more than $\alpha_1(M)$ with high probability.  

\textbf{Expurgation.}  We define a thinning operation on the codebook $\mc{C}_M (J)$ to form 
the codebook $\mc{D}_M(K)$ as follows: remove all codewords in $\mc{C}_M(J)$ which are not in the piecewise constant-composition set $\{\typ{\codechunk}{P}\}^M$.  That is, we keep only those codewords which have type $P$ in each chunk.  If there are fewer than $K$ remaining codewords after this expurgation, declare an encoding error -- if there are more than $K$ then keep the first $K$ codewords.  The decoding rule is the same MMI rule as before.

The probability of this encoding error can be bounded using Lemma \ref{lemma:cat_set_bd} on page \pageref{lemma:cat_set_bd}, which states that the probability that a codeword drawn uniformly from $\typ{M \codechunk}{P}$ is also in the set $\{\typ{\codechunk}{P}\}^M$ is at least $\beta_0(\codechunk,M) = \exp(- \eta M \log(\codechunk+1) )$ for $\codechunk$ sufficiently large.  Therefore the expected number of codewords in $\mc{C}_M(J)$ that survive the thinning is at least $J \exp(- \eta M \log \codechunk )$.  Since the codewords are i.i.d., the probability that the number of codewords surviving the thinning is at least $\beta J$ can be bounded:	
	\begin{align}
	\prob\left( \left| \mc{C}_M(J) \cap \{\typ{\codechunk}{P}\}^M \right| \le \beta J \right)
		\le
		J \cdot \exp\left( - J \cdot D\left( \beta \big\| \beta_0(c,M) \right) \right)~.
	\end{align}
By choosing $K = \beta_0(\codechunk,M)^2 J$, which corresponds to $\beta = \beta_0(\codechunk,M)^2$, the probability of encoder error can be made arbitrarily small.  The rate of codebook $\mc{D}_M(K)$ is
	\begin{align}
	\frac{1}{M \codechunk} \log K = \frac{1}{M \codechunk} \log J - \frac{2 \eta \log \codechunk }{\codechunk}~.
	\end{align}
Setting $k = \log K$, note from (\ref{eq:avcratebound}), for sufficiently large $\codechunk$ the error can be made small as long as 
	\begin{align}
	\frac{k}{M \codechunk} \le \min_{Q \in \mc{Q}(M)} \empMI{P}{W_Q} - 3 \delta - \frac{\delta'}{4}~.
	\label{eq:thinratebound}
	\end{align}
Setting $\delta = \delta'/4$ in the original construction of $\mc{C}_M(J)$, for sufficiently large $\codechunk$, equation (\ref{eq:thinratebound}) guarantees a bound on the error.  In particular, since the codewords of $\mc{D}_M(K)$ are a subset of the codewords of $\mc{C}_M(K)$, the average error can increase at most by a factor of $J/K$:
	\begin{align}
	\max_{\mbf{z} \in \typ{Mn}{Q} : Q \in \mc{Q}(M)} 
	\prob_{\mc{C}_M(J)}\left( \frac{1}{K} \sum_{j=1}^{K} \varepsilon_j(M, \mbf{z}, \mc{D}_M(K)) 
	\ge \frac{\alpha_1(\codechunk, M)}{\beta_0(\codechunk,M)^2}
	\right)
	\le 
	\frac{ \exp \left( - M \codechunk \cbookeps \right) }{ \alpha_1(\codechunk, M)}~.
	\label{eq:thinerrorbound}
	\end{align}
This shows that for any $\delta' > 0$ the average error can be bounded.

\textbf{Nesting.}  Consider the codebook $\mc{E}_M(K)$ formed by drawing $K$ codewords independently uniformly distributed on $\{\typ{\codechunk}{P}\}^M$ together with the MMI decoding rule.  It is clear that $\mc{D}_M(K)$ has the same distribution as $\mc{E}_M(K)$, so the bound (\ref{eq:thinerrorbound}) holds for $\mc{E}_M(K)$ as well:
	\begin{align}
	\max_{\mbf{z} \in \typ{Mn}{Q} : Q \in \mc{Q}(M)} 
	\prob_{\mc{E}_M(K)}\left( \frac{1}{K} \sum_{j=1}^{K} \varepsilon_j(M, \mbf{z}, \mc{E}_M(K)) 
	\ge \frac{\alpha_1(\codechunk, M)}{\beta_0(\codechunk, M)^2}
	\right)
	\le 
	\frac{ \exp \left( - M \codechunk \cbookeps \right) }{ \alpha_1(\codechunk, M)}~.
	\label{eq:nesterrorbound}
	\end{align}

Note that $\mc{E}_M(K)$ has the same distribution as the codebook $\mc{E}_{\mhi}(K)$ truncated to blocklength $M \codechunk$.  The set of $\mbf{z} \in \mc{Z}^{\mhi \codechunk}$ for which the bounds (\ref{eq:nesterrorbound}) hold is
	\begin{align}
	\mc{Z}(K) = \left\{
		\mbf{z} \in \mc{Z}^{\mhi \codechunk}
		:
		(z_1, \ldots, z_{M\codechunk}) \in \typ{M \codechunk}{Q},\ 
		Q \in \mc{Q}(M),\ 
		M \in \{\mlo, \ldots, \mhi\}
		\right\}~.
	\end{align}
For any $\mbf{z}$ in this set and decoding time $M$ such that $(z_1, \ldots, z_{M\codechunk}) \in \typ{M \codechunk}{Q}$ for some $Q \in \mc{Q}(M)$, the probability that the random codebook $\mc{E}_{\mhi}(K)$ truncated to blocklength $M$ has average error probability exceeding $\frac{\alpha_1(\codechunk, M)}{\beta_0(\codechunk, M)^2}$ can be made arbitrarily small.

\textbf{Back to maximal error.}  The equation (\ref{eq:nesterrorbound}) says that the average error under the randomized code $\mc{E}_M(K)$ can be made arbitrarily small.  Standard results on AVCs \cite[Exercise 2.6.5]{CsiszarKorner:82book} show that by permuting the message index the same bound holds for the maximal error.  Thus with probability $1 - \exp(- M \codechunk \cbookeps)/\alpha_1(\codechunk, M)$ the randomly selected codebook has maximal error smaller than $\frac{\alpha_1(\codechunk, M)}{\beta_0(\codechunk, M)^2}$.  The probability of encoding error is vanishingly small with respect to these quantities, so the total probability of error can be upper bounded:
	\begin{align}
	\ratelesserr(M,\mbf{z})
		&<
		\max\left( 
			\frac{ \exp \left( - M \codechunk \cbookeps \right) }{ \alpha_1(\codechunk, M)}, 
			\frac{\alpha_1(\codechunk, M)}{\beta_0(\codechunk, M)^2}		\right) \\
		&<
		\max\left( 
			\frac{ \exp \left( - M \codechunk \cbookeps \right) }{ \alpha_1(\codechunk, M)}, 
			\frac{\alpha_1(\codechunk, M)}{\exp(-2 \eta M \log \codechunk)}		
			\right)~.
	\end{align}
Selecting $\alpha_1(\codechunk, M) = \exp( - M \codechunk \cbookeps/2 )$ yields the following bound
for sufficiently large $\codechunk$:
	\begin{align}
	\ratelesserr(M,\mbf{z}) < \exp( - M \codechunk \cbookeps/3 )~.
	\end{align}
Setting $\rlesseps = \cbookeps/3$ yields the result.
\end{proof}

\textbf{Remark:}  As stated, the codebook constructed in Lemma \ref{lemma:rateless} requires a very large amount of common randomness shared between the encoder and decoder.  
This issue is discussed in more detail in Section \ref{sec:discussion} 
 on page \pageref{discussion:commonrandomness}.

\subsection{Proof of Theorem \ref{theorem:main_result}}
We now combine the results in the previous sections to prove Theorem
\ref{theorem:main_result}. Namely, in Section
\ref{sect:error_events}, we defined error events $E_1(r)$ and
$E_2(r)$. We then provided bounds on $E_1(r)$ in Lemma
\ref{lemma:chan_est} and proved the existence of a randomized
rateless code with a small maximal error probability in Lemma
\ref{lemma:rateless}. As will be seen in the proof, Lemmas \ref{lemma:chan_est}
and \ref{lemma:rateless} provide a bound on $E_2(r)$. By combining this bound with the bound 
on $E_1(r)$ and parameter assumptions in (\ref{eq:param_assump}), the result follows straightforwardly.

\begin{proof} The proof is divided into three parts. We first establish in equation (\ref{eq:fb_small}) that for sufficiently large $N$, the feedback rate can be made arbitrarily small. In the second part, we bound the error probability in (\ref{eq:algerror_bd_combo}). In the third part, we give a lower bound on the rate under the assumption the error event does not occur, which leads to equation (\ref{eq:rateloss_small}). These parts establish all necessary components in the statement of the result.

We use the coding strategy proposed in Section \ref{sect:proposed_strategy}. Note that under 
the parameter assumptions in (\ref{eq:param_assump}), for all $\fbrate^\ast > 0$, there exists sufficiently large $N$ such that the feedback rate (\ref{eq:strategy_fbrate}) satisfies the following bound:
	\begin{align} \label{eq:fb_small}
		\Rfb < \fbrate^\ast~.
	\end{align}
Fix a sequence $\mbf{z}$.  The scheme induces a partition of $\mbf{z}$ into rounds $r = 1, 2, \ldots$ at times $\{\ell_r\}$.  Let $\mbf{z}(r) = \mbf{z}_{\ell_{r-1} + 1}^{\ell_r}$ be the
state sequence during the $r$-th round.  The type of $\mbf{z}$ can be written as:
    \begin{align}
    \styp{\mbf{z}} = \sum_{r} \frac{\ell_{r} - \ell_{r-1}}{N} \styp{\mbf{z}(r)}~,
    \end{align}
where $\ell_{r}$ is the length of a round, as defined in equation (\ref{eq:roundlength_defn}). 
Lemma \ref{lemma:chan_est} shows that for any $\rategap > 0$ there exists an $N$ sufficiently large such that the channel estimation error probability $\prob(E_1(r))$ is exponentially small.    Taking a union bound over all rounds, the probability of estimation error is
    \begin{align} \label{eq:algerror_bd0}
    \prob\left( \bigcup_{r} E_1(r) \right)
        \le 2 \frac{N}{\chunk} \exp\left( -a_2 \train \right)~.
       \end{align}
By the parameter assumptions in (\ref{eq:param_assump}), $N/\chunk$ and $\train$ grow polynomially in $N$, so for large $N$ the exponential term dominates and the probability of an estimation error in any round goes to $0$.  Given any $\algerror > 0$, for sufficiently large $N$, equation (\ref{eq:algerror_bd0}) gives the following bound:
	\begin{align} \label{eq:algerror_bd1}
		    \prob\left( \bigcup_{r} E_1(r) \right)
        \le \frac{\algerror}{2}~.
	\end{align}
	
Suppose round $r$ was terminated due to ``BAD NOISE.''  In this case, from (\ref{eq:badnoise_condition}) we have the following:
    \begin{align}
    \empMI{P}{\hWempr{M(r)}} - \rategap < \badrate~.
    \end{align}
By Lemma \ref{lemma:chan_est}, $\empMI{P}{\hWempr{M(r)}}$ is close to $\empMI{P}{W_{\mbf{z}(r)}}$.  That is, there exists an $N$ sufficiently large such that with probability $1 - \exp\left( -a_2 \train \right)$, we have that $\empMI{P}{W_{\mbf{z}(r)}} < \badrate + 3 \rategap/2$.
For any $\rateloss > 0$, we can choose a large $N$ and small $\badrate$ such that the following holds for all ``BAD NOISE'' rounds:
    \begin{align} \label{eq:rateloss_badnoise}
    \empMI{P}{W_{\mbf{z}(r)}} < \rateloss/2~.
    \end{align}
Therefore, for rounds which are terminated due to bad noise, the state sequence $\mbf{z}(r)$ has a type $\styp{\mbf{z}(r)}$ such that $\empMI{P}{W_{\mbf{z}(r)}}$ is small.

Now suppose the decoder attempted to decode at the end of round $r$.  Then (\ref{eq:decrule}) implies that the estimated empirical mutual information from the training satisfies a different inequality:
        \begin{align}
         \empMI{P}{\hWempr{M(r)}} - \rategap & > \frac{\rbits}{(\chunk - \train) \cdot M(r)}~.
       \end{align}
If the event $E_1(r)$ does not happen, then $\empMI{P}{\hWempr{M(r)}}$ is within $\rategap/2$ of the empirical mutual information during the non-training positions:
    \begin{align} \label{eq:emp_mi_est}
    \frac{\rbits}{(\chunk - \train) \cdot M(r)}
    <
    \empMI{P}{ \frac{1}{r (\chunk - \train)} \sum_{n=1}^{M(r)} \sum_{i \in \icode_{n,r}} W(y | x, z_i)}
            - \frac{\rategap}{2}~.
    \end{align}
Thus, conditioned on $E_1^c(r)$ and under our assumption (\ref{eq:param_assump}), (\ref{eq:emp_mi_est}) and Lemma \ref{lemma:rateless} imply that for $\delta' = \rategap/2$ there exists a sufficiently large $N$, exponent $\rlesseps > 0$, and an $(\mhi, \chunk - \train, \rbits)$ randomized rateless code with error
$\ratelesserr(M,\mbf{z}) < \exp( - M (\chunk-\train) \rlesseps )$ for every round $r$ in which decoding occurs. A union bound then implies the decoding error probability over all rounds in which decoding occurs can be bounded:
    \begin{align} \label{eq:algerror_bd2}
    	\prob\left( \bigcup_{r} E_2(r) ~ \left| ~ \bigcap_{r} E_1^c (r) \right. \right) \leq \frac{N}{\chunk} \exp( - (\chunk-\train) \rlesseps )~.
    \end{align}
By (\ref{eq:param_assump}), this can be made arbitrarily small for sufficiently large $N$, and therefore for any $\algerror > 0$, (\ref{eq:algerror_bd1}) and (\ref{eq:algerror_bd2}) imply there exists an $N$ sufficiently large such that the estimation error and decoding error can be made smaller than $\algerror$:
    \begin{align} \label{eq:algerror_bd_combo}
    	\prob\left( \bigcup_{r, i=1,2} E_i(r) \right) \leq \algerror~.
    \end{align}

The remaining thing is to calculate the rate, given that none of the error events occur. If the decoder attempted to decode after $M(r)$ chunks, then after $M(r)-1$ chunks the threshold condition in (\ref{eq:decrule}) was not satisfied:
    \begin{align}
    \frac{\rbits}{(\chunk - \train) \cdot (M(r)-1)}
         \ge
         \empMI{P}{\hWempr{M(r)-1}} - \rategap~,
    \end{align}
Our assumption in equation (\ref{eq:param_assump}) that $(\chunk(N))^2/\rbits(N) \to 0$ and our lower bound on the length of a round in Lemma \ref{lemma:M_upperbd} is $\Theta(\rbits(N)/\chunk(N))$ channel uses imply that for sufficiently large $N$, the amount that the estimated mutual information can change over the course of a single chunk ($\chunk(N)$ channel uses) can be made arbitrarily small.  More formally, for any $\chunkmoveeps > 0$, for sufficiently large $N$,
    \begin{align}
    \left| \empMI{P}{\hWempr{M(r)-1}} - \empMI{P}{\hWempr{M(r)}} \right| < \chunkmoveeps~.
    \end{align}
Thus
    \begin{align}
    \frac{\rbits}{(\chunk - \train) \cdot M(r)}
    &= \left( 1 - \frac{1}{M(r)} \right)
        \frac{\rbits}{(\chunk - \train) \cdot (M(r)-1)} \\
    &\ge
        \left( 1 - \frac{1}{M(r)} \right)
        \left( \empMI{P}{\hWempr{M(r)-1}} - \rategap \right) \\
    &\ge
        \left( 1 - \frac{1}{M(r)} \right)
        \left( \empMI{P}{\hWempr{M(r)}} - \chunkmoveeps - \rategap \right)~.
    \end{align}
Finally, the overall empirical rate for the round is slightly lower because of overhead from training:
    \begin{align}
    \frac{\rbits}{\chunk M(r)}
    \ge
        \left( 1 - \frac{1}{M(r)} \right)
        \left( 1 - \frac{\train}{\chunk} \right)
        \left( \empMI{P}{\hWempr{M(r)}} - \chunkmoveeps - \rategap \right)
    \end{align}
Under the assumptions in (\ref{eq:param_assump}) and conditioned on (\ref{eq:trainwhole2}) not occurring, for any $\rateloss > 0$ there exists an $N$ sufficiently large such that
    \begin{align} \label{eq:rateloss_decoding}
    \frac{\rbits}{\chunk M(r)} \ge \empMI{P}{\hWempr{M(r)}} - \rateloss/2~.
    \end{align}

The final thing to consider is the last round $r^{\ast}$ in which the decoder does not decode.  The maximum length of this round is $\mhi \chunk$, and
    \begin{align} 	\label{eq:rateloss_bd0}
    \frac{\ell_{r^{\ast}} - \ell_{r^{\ast}-1}}{N} \empMI{P}{W_{\mbf{z}(r^{\ast})}}
    \le \frac{\mhi \chunk}{N} \max\{|\mc{X}|, |\mc{Y}| \}~.
    \end{align}
By (\ref{eq:param_assump}), for sufficiently large $N$, (\ref{eq:rateloss_bd0}) can be made to satisfy the following condition:
	\begin{align}
		    \frac{\ell_{r^{\ast}} - \ell_{r^{\ast}-1}}{N} \empMI{P}{W_{\mbf{z}(r^{\ast})}} \le \rateloss/2 ~.\label{eq:rateloss_bd1}
	\end{align}
To summarize, for sufficiently large $N$ and each round $r$ in which the decoder feeds back ``BAD NOISE'' or ``DECODED'', the rate at which the scheme decodes can be lower bounded by
    \begin{align} \label{eq:rateloss_bd2}
    R(r) \ge \empMI{P}{W_{\mbf{z}(r)}} - \rateloss/2~,
    \end{align}
which follows from (\ref{eq:rateloss_badnoise}) and (\ref{eq:rateloss_decoding}). Finally, we use  (\ref{eq:rateloss_bd1}), (\ref{eq:rateloss_bd2}), and the convexity of mutual information to provide a lower bound on the overall rate of the scheme:
    \begin{align}
    R
    &\ge \sum_{r=1}^{r^{\ast}-1} \frac{\ell_{r} - \ell_{r-1}}{N} \left(\empMI{P}{W_{\mbf{z}(r)}} - \rateloss/2
            \right) \\
    &\ge \empMI{P}{ \sum_{r} \frac{\ell_{r} - \ell_{r-1}}{N} W_{\mbf{z}(r)} } - \rateloss \\
    &= \empMI{P}{ W_{\mbf{z}} } - \rateloss~. \label{eq:rateloss_small}
    \end{align}
As mentioned above, the result now follows immediately from (\ref{eq:fb_small}), (\ref{eq:algerror_bd_combo}), and (\ref{eq:rateloss_small}).
\end{proof}

\section{Discussion \label{sec:discussion}}

The central question we tried to address in this paper was how much feedback is needed to achieve the channel mutual information in an individual sequence setting of \cite{ShayevitzF:06empirical}.  Limited feedback in two-way and relaying systems have been studied before \cite{Schwartz:63feedback,Hayes:68feedback,AhmedKSA:06outage} and are used in many modern-day communication protocols for control information.   Research interest on limited feedback for multiuser and multiantenna models has grown tremendously (see \cite{LoveHLGRA:08limited} and references therein).  Quantifying the role and possible benefits of limited feedback is an important step in understanding how to structure adaptive communication systems.

In this paper we described a coding strategy under a general channel uncertainty model that uses limited feedback to achieve rates arbitrarily close to an i.i.d. discrete memoryless channel with the same first-order statistics.  Feedback allows the system to adapt the coding rate based on the channel conditions.  When each element in the class of channels over which we are uncertain has the same capacity achieving input distribution, the coding strategy achieves rates at least as large as the empirical capacity, which is defined as the capacity of an i.i.d. discrete memoryless channel with the same first-order statistics. Since the rates that we can guarantee for our scheme are close to the average channel in a round, our total rate over many rounds may in fact exceed the empirical capacity.  This is due to the convexity of mutual information in the channel.

The work is a commentary on an earlier investigation by Shayevitz and Feder \cite{ShayevitzF:06empirical} that considered the case in which the encoder has access to full output feedback from the decoder and  allows the encoder to provide control and estimation information in a set of training sequences that can be selected via common randomness.  Furthermore, their scheme does not require a fixed blocklength in advance and hence has an infinite horizon.  By contrast, our strategy can be viewed as a kind of incremental redundancy hybrid ARQ \cite{Soljanin:03harq}, in which the decoder uses the feedback link to terminate rounds that are too noisy while less noisy rounds are individually decoded.  In order to set the parameters for our scheme we must fix a total blocklength in advance, although it may be possible to redefine the scheme to operate without a horizon, as in \cite{ShayevitzF:06empirical}.  

An interesting point is that our basic algorithm uses standard ``tricks'' for communication systems, such as channel estimation via pilot signals, ARQ with rateless codes, and randomization.  By adapting or reusing technologies that have already been developed, these gains can be realized more easily.  Several open questions and extensions of the algorithm presented here would be of interest, two of which are the following:
\begin{enumerate}
 \item \label{discussion:commonrandomness} {\em The necessary amount of common randomness.} The algorithm presented here requires common randomness between the encoder and the decoder 
to show that zero-rate feedback is sufficient to achieve the empirical mutual information. We now 
provide an account of how much common randomness is required. There are two places where our algorithm requires common randomness, namely, {\em (i),} the selection of the channel training positions, and {\em (ii),} the random selection of the codebook for each round.

For {\em (i),} the training positions, under our parameter assumptions in \eqref{eq:param_assump}, $\log N$ bits are required to indicate the 
position of each of the $\train = \Theta (N^{g_3})$ training positions for each chunk of length $\chunk = \Theta (N^{g_2})$, where $\frac{1}{2} > g_2 > g_3 > 0$.  Since there are $N/\chunk$ chunks, this requires at total of 
	\begin{align*}
		 N \cdot \frac{\train(N)}{\chunk(N)} \cdot \log N = \Theta \left( N^{1-(g_2 - g_3)} \cdot \log N \right) \text{ bits}~,
	\end{align*}
which, under our parameter assumptions is sublinear in $N$. For {\em (ii),} the selection 
of a codebook for each round can require as much as $\mhi \cdot \capmax$ bits of common randomness per codeword for a total of $\mhi \cdot \capmax \cdot 2^{\mhi \cdot \capmax}$ bits of common randomness, where $\capmax = \log \min\{|\mc{X}|,|\mc{Y}|\}$. The total number of rounds can be as large as $\frac{N}{\mlo}$, where $\mhi$ and $\mlo$ are defined in Lemma \ref{lemma:M_upperbd}. Thus, codebook selection requires 
	\begin{align*}
		\mhi \cdot \capmax \cdot 2^{\mhi \cdot \capmax} \cdot \frac{N}{\mlo} = \frac{(\capmax)^2}{\badrate} \cdot N \cdot 2^{\mhi \cdot \capmax} \text{ bits}~,
	\end{align*}
where $\badrate$, defined in \eqref{eq:badnoise_condition}, is a parameter of the algorithm that does not depend on $N$. Thus, the total 
common randomness required is superlinear in $N$.

Reducing common randomness is outside the scope of the current work. However, if common randomness were not available between the encoder and decoder, it could be provided by the feedback link, but then the strategy considered in this paper would require a prohibitively large feedback rate that would increase with the blocklength $N$. To show instead that the feedback rate could be made asymptotically negligible in such a setting, one would need to prove the existence of a strategy for which the total bits of common randomness required would be sublinear in the blocklength $N$.
 
 A potential technique that might be useful could be to adapt tools from the theory of arbitrarily varying channels \cite{Ahlswede:78elimination} to find nested code constructions that use a limited amount of common randomness \cite{SarwateG:07rateless}.  Such an argument would require showing that a randomized code with support on $T = (\mhi \chunk)^2$ codes can be made from iid sampling of the randomized code of Lemma \ref{lemma:rateless}.  This new randomized code could then be used to establish a sublinear number of bits. Specifically, in each round, this new randomized code could be used by selecting one of the $T$ codes for use.  This would require $\log T = O( \log N )$ bits per round for a total cost of at most $O( (N/\mlo) \log N )$, which would be sublinear in $N$.  

Another potential method, more in the interactive coding spirit of feedback systems, could be to show the existence of deterministic list-decodable codes with small list sizes.  If the list is of size $L$, the decoder could find $L$ bits in the message, which could be used to disambiguate the list \cite{Sahai:07balancing}.  By using $L \log k$ bits in the feedback, the decoder could request those $L$ bits from the encoder.  By sacrificing just $O(L)$ more forward channel uses, the encoder could send the $L$ bits with negligible impact to the rate. If the empirical mutual information in the next round were above $\badrate$, this would be 
sufficient for success.  %

\item {\em Adaptation of the channel input, and thus, codebook distribution.}
An apparent limitation of the algorithm presented here is that the channel input distribution is selected once and kept fixed throughout, irrespective of the behavior of the state sequence.  Adaptation of the channel input distribution may lead to higher \textit{or} lower rates.  One interesting question would be whether universal prediction techniques \cite{MerhavF:98universal} can be used in conjunction with channel coding to adapt the channel input.  Another set of interesting questions emerges if we consider performance on a sequence that comes from a certain class of sequences. For example, if one 
were to consider an alternate notion of empirical capacity in which the empirical sequences were estimated as finite-order Markov models, adapting the channel input distribution may give quantifiable benefits.

\end{enumerate}

The individual sequence model considered in this paper is by no means the only way of modeling channel uncertainty.  One model which does away with modeling the channel state was proposed by Lomnitz and Feder \cite{LomnitzF:09isit}.  An alternative model within the state sequence framework is a class of noise models that varies in a piecewise-constant fashion.  This model is related to the on-line estimation problems studied by Kozat and Singer \cite{KozatS:06ita} and may be useful to understand block fading.  For such models we could consider modifying our strategy to adapt the value of $\rbits$ by trying to learn the coherence time of the channel.  In the sense of competitive optimality, the competition class could be coding strategies that know the coherence intervals exactly.
Variations on the model of the feedback link may also lead to interesting new results. Alternative channel models in which the feedback is noisy or allowed to have time-varying rate may present new issues to consider, particularly for the case in which there is uncertainty in the feedback link as well.   For future communications systems that must share common resources, such investigations may shed new light on strategies in these settings.

\section*{Acknowledgments}

We thank Ofer Shayevitz and Meir Feder for providing a preprint of their paper after their presentation of it at the Kailath Colloquium \cite{Feder:06kailath}. This work grew out of a presentation of that work for UC Berkeley's advanced information theory course EE290S. Special thanks go to the other students in that class for helpful discussions.

\appendix

We provide here the proofs of the lemmas used in the analysis of our algorithm\footnote{We were unable to find a standard reference for the entropy bounds below, which is why we provide the derivation.  The proofs can be omitted for space if the reviewers and editors think it appropriate to do so.}.

\subsection{Bounds on entropy and mutual information}

We need a short technical lemma about concave functions.

\begin{lemma}  \label{lemma:concaveLem}
Let $f$ be a concave increasing function on $[a,b]$.  Then if $a \le x \le x + \epsilon \le b$, we have	
	\begin{align}
	f(x+\epsilon) - f(x) \le f(a + \epsilon) - f(a)~. 
	\label{eq:concIneq}
	\end{align}
\end{lemma}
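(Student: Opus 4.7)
The claim is the standard fact that for a concave function the forward difference $f(x+\epsilon) - f(x)$ is a nonincreasing function of $x$, so the differences over a fixed-length interval shrink as we move rightward. My plan is to give a direct proof from the definition of concavity, without invoking differentiability.

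First I would dispose of the trivial case $x = a$, for which both sides of \eqref{eq:concIneq} coincide. Assuming $x > a$, I would express both of the interior points $x$ and $a + \epsilon$ as convex combinations of the two extreme points $a$ and $x + \epsilon$. Concretely, set
    \begin{align}
    \lambda = \frac{\epsilon}{(x-a) + \epsilon} \in (0,1)~,
    \end{align}
so that a direct check gives
    \begin{align}
    x &= \lambda \, a + (1-\lambda)(x+\epsilon)~, \\
    a + \epsilon &= (1-\lambda)\, a + \lambda \, (x+\epsilon)~.
    \end{align}
Note that both $x$ and $a + \epsilon$ lie in $[a,b]$ by hypothesis, so concavity applies at both points with the same pair of endpoints.

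Next I would apply the definition of concavity of $f$ at the two points above:
    \begin{align}
    f(x) &\ge \lambda\, f(a) + (1-\lambda)\, f(x+\epsilon)~, \\
    f(a+\epsilon) &\ge (1-\lambda)\, f(a) + \lambda\, f(x+\epsilon)~.
    \end{align}
Adding the two inequalities, the $\lambda$ coefficients combine to $1$ on each side, yielding
    \begin{align}
    f(x) + f(a+\epsilon) \ge f(a) + f(x+\epsilon)~,
    \end{align}
which rearranges to exactly \eqref{eq:concIneq}. The hypothesis that $f$ is increasing is not actually needed for the inequality itself; I would note this in passing but not belabor it, since the lemma as stated suffices for its later use.

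The main (minor) obstacle is simply identifying the right convex combination; once the coefficient $\lambda = \epsilon / ((x-a) + \epsilon)$ is written down, verifying both identities and assembling the two concavity inequalities is routine. I do not anticipate any genuine technical difficulty.
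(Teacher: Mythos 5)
Your proof is correct and takes essentially the same approach as the paper: express $x$ and $a+\epsilon$ as convex combinations of the endpoints $a$ and $x+\epsilon$, apply concavity to each, and add. The paper first normalizes $a = 0$ and $f(a) = 0$ before doing so, which is only a cosmetic simplification of what you wrote; your observation that the monotonicity hypothesis is unnecessary is also correct.
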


\begin{proof}
Without loss of generality we can take $a = 0$, $b = 1$, and 
$f(a) = 0$.  Now consider
	\begin{align*}
	f(x) = f\left( 
		\frac{x}{x + \epsilon} \cdot (x + \epsilon) 
		+ \frac{\epsilon}{x + \epsilon} \cdot 0 \right) 
		&\ge \frac{x}{x + \epsilon} f(x + \epsilon) 
			+ \frac{\epsilon}{x + \epsilon} f(0) \\
	&= \frac{x}{x + \epsilon} f(x + \epsilon) \\
	f(\epsilon) = f\left( 
		\frac{x}{x + \epsilon} \cdot 0 
		+ \frac{\epsilon}{x + \epsilon} \cdot (x + \epsilon) 
		\right) 
		&\ge \frac{x}{x + \epsilon} f(0) 
		+ \frac{\epsilon}{x + \epsilon} f(x + \epsilon) \\
	&= \frac{\epsilon}{x + \epsilon} f(x + \epsilon)~.
	\end{align*}
Therefore
	\begin{align}
	f(x) + f(\epsilon) \ge f(x + \epsilon)~,
	\end{align}
as desired.
\end{proof}

Using the preceding lemma, we can show that a bound on the total variational distance between two distributions gives a bound on the
entropy between those two distributions.

\begin{lemma} \label{lemma:var_bnd_ent} 
Let $P$ and $Q$ be two distributions on a finite set $\mc{S}$ with $|\mc{S}| \ge 2$.  If
	\begin{align}
		\left| P(s) - Q(s) \right| \le \epsilon \qquad \forall s \in \mc{S}~,
	\end{align}
then 
	\begin{align}
		\left| H(P) - H(Q) \right| \leq (|\mc{S}| - 1) \cdot h_b(\epsilon) + (|\mc{S}| - 1) \log(|\mc{S}| - 1) \cdot \epsilon~,
	\end{align}
where $h_b(\cdot)$ is the binary entropy function.
\end{lemma}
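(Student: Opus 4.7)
The plan is to reduce the statement to a Fannes-type inequality combined with the concavity of the binary entropy. First, I would use the normalization constraints $\sum_s P(s) = \sum_s Q(s) = 1$ and the pointwise hypothesis $|P(s) - Q(s)| \le \epsilon$ to control the total variation distance $T := \tfrac{1}{2}\|P - Q\|_1$. Setting $\mc{S}_+ = \{s : P(s) > Q(s)\}$ and $\mc{S}_- = \{s : P(s) < Q(s)\}$, the identity $\sum_{s \in \mc{S}_+}(P(s)-Q(s)) = \sum_{s \in \mc{S}_-}(Q(s)-P(s)) = T$ together with $|P(s)-Q(s)| \le \epsilon$ immediately gives $T \le |\mc{S}_+|\epsilon$ and $T \le |\mc{S}_-|\epsilon$, so $T \le \min(|\mc{S}_+|,|\mc{S}_-|)\epsilon \le (|\mc{S}|-1)\epsilon$ (using $|\mc{S}_+|+|\mc{S}_-| \le |\mc{S}|$).

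Next I would establish a Fannes-type inequality
\[
|H(P) - H(Q)| \le T \log(|\mc{S}| - 1) + h_b(T)
\]
for $T \le 1/2$. The plan for this step is the standard variational argument: for a fixed value of $T$, the maximum of $|H(P)-H(Q)|$ subject to $\tfrac{1}{2}\|P-Q\|_1 = T$ is attained when one distribution is concentrated on a single atom $s_0$ and the other moves mass exactly $T$ uniformly onto the remaining $|\mc{S}|-1$ atoms; direct computation for this extremal pair yields precisely $T\log(|\mc{S}|-1) + h_b(T)$. I expect this to be the main technical obstacle, since without citing Audenaert's sharp Fannes inequality one must argue that no other pair of distributions does better, which requires a careful majorization or Lagrangian-type optimization over the simplex.

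Finally, to convert $h_b(T) \le h_b((|\mc{S}|-1)\epsilon)$ into the form $(|\mc{S}|-1) h_b(\epsilon)$ appearing in the statement, I would iterate Lemma \ref{lemma:concaveLem} for the concave increasing function $h_b$ on $[0, 1/2]$ with $h_b(0) = 0$. For any positive integer $k$ with $k\epsilon \le 1/2$, the lemma gives $h_b(k\epsilon) - h_b((k-1)\epsilon) \le h_b(\epsilon) - h_b(0) = h_b(\epsilon)$, and telescoping over $k = 1, \ldots, c$ yields $h_b(c\epsilon) \le c \cdot h_b(\epsilon)$ for every integer $c \ge 1$ with $c\epsilon \le 1/2$. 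Applying this with $c = |\mc{S}|-1$ and combining with the Fannes-type bound and the bound on $T$ from the first step produces exactly the inequality in the statement. In the degenerate regime $(|\mc{S}|-1)\epsilon > 1/2$, where the telescoping breaks down, the right-hand side of the claimed bound already exceeds $\log|\mc{S}| \ge |H(P)-H(Q)|$, so the inequality holds automatically by the trivial entropy bound.
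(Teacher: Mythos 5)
Your approach is genuinely different from the paper's: the paper performs a direct induction on $|\mc{S}|$ using the chain rule $H(P) = h_b(\lambda) + \lambda H(P')$ and Lemma \ref{lemma:concaveLem} applied to $h_b$ on $[0,1/2]$ and $[1/2,1]$ at the base case, whereas you route through total variation distance, a sharp Fannes-type inequality, and then a telescoping of Lemma \ref{lemma:concaveLem}. However, your route has a genuine gap that you yourself flag: the step
\begin{align*}
|H(P) - H(Q)| \le T \log(|\mc{S}|-1) + h_b(T)
\end{align*}
is not proved, only asserted via ``the extremum is a point mass versus a spread-out perturbation.'' This is exactly Audenaert's refinement of Fannes, and establishing that the named pair is extremal is nontrivial — it is not a quick Lagrangian computation, since the feasible region is a non-smooth $\ell_1$-ball intersected with the simplex, and one has to control both signs of $H(P)-H(Q)$. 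The paper's induction avoids ever needing such an extremal characterization, which is precisely what makes it self-contained. Without either citing the sharp Fannes bound or supplying a proof of it, your argument does not close.

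A secondary issue is the ``degenerate regime'' fallback: the claim that $(|\mc{S}|-1) h_b(\epsilon) + (|\mc{S}|-1)\log(|\mc{S}|-1)\,\epsilon \ge \log|\mc{S}|$ whenever $(|\mc{S}|-1)\epsilon > 1/2$ is false for $|\mc{S}|=2$. There the right-hand side is just $h_b(\epsilon)$, which for $\epsilon > 1/2$ is strictly less than $\log 2$, while $P=(1/2,1/2)$ and $Q=(0,1)$ give $|H(P)-H(Q)| = \log 2$. (In fact this shows the lemma as stated is implicitly restricted to $\epsilon \le 1/2$ — the paper's own base-case use of Lemma \ref{lemma:concaveLem} also tacitly assumes this — so your fallback cannot be salvaged by a cruder bound; the right move is simply to note the implicit $\epsilon\le 1/2$ restriction under which your telescoping with $c=1$ is vacuous for $|\mc{S}|=2$, and to check that for $|\mc{S}|\ge 3$ the claimed right-hand side does indeed exceed $\log|\mc{S}|$ once $(|\mc{S}|-1)\epsilon > 1/2$.) The telescoping idea in your last step, $h_b(c\epsilon) \le c\, h_b(\epsilon)$ via Lemma \ref{lemma:concaveLem}, is correct and matches the role that lemma plays in the paper.
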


\begin{proof} 
Let $\mc{S} = \{s_1, s_2, \ldots\}$.  We proceed by induction on $|\mc{S}|$.  Suppose $|\mc{S}| = 2$, and let $p = P(s_1)$ and $q = Q(s_1)$.  The entropy function $h_b(x)$ is concave, increasing on $[0,1/2]$ and decreasing on $[1/2,1]$.  Applying Lemma \ref{lemma:concaveLem} to each interval, we obtain the bound:
	\begin{align}
	|h_b(x + \epsilon) - h_b(x)| \le h_b(\epsilon)~.
	\end{align}
Since $H(P) = h_b(p)$ and $H(Q) = h_b(q)$, this proves our result.

Now suppose that the lemma holds for $|\mc{S}| \le m - 1$, and consider the case $|\mc{S}| = m$.  Without loss of generality, let $P(s_m) > 0$ and $Q(s_m) > 0$.  Let $\lambda = (1 - P(s_m))$ and $\mu = (1 - Q(s_m))$ and note that $|\lambda - \mu| < \epsilon$ by assumption.  Define the $(m-1)$ dimensional distributions $P' = \lambda^{-1} (P(s_1), \ldots P(s_{m-1}))$ and $Q' = \lambda^{-1} (Q(s_1), \ldots Q(s_{m-1}))$, so that
	\begin{align*}
	P &= (\lambda P', (1 - \lambda)) \\
	Q &= (\mu Q', (1 - \mu))~.
	\end{align*}
Therefore,
	\begin{align*}
	H(P) &= h_b(\lambda) + \lambda H(P') \\
	H(Q) &= h_b(\mu) + \mu H(Q')~.
	\end{align*}
	
Now we we can expand the difference of the entropies.  Using the fact that $\lambda < 1$, the induction hypothesis on $|H(P') - H(Q')|$ and
$|h_b(\lambda) - h_b(\mu)|$, and the cardinality bound on the entropy $H(Q')$ yields the result:
	\begin{align*}
	|H(P) - H(Q)| &= |\lambda H(P') - \mu H(Q') + h_b(\lambda) - h_b(\mu) | \\
	&\le \lambda |H(P') - H(Q')| + |\lambda - \mu| H(Q') + |h_b(\lambda) - h_b(\mu) | \\
	&\le (m - 2) \cdot h_b(\epsilon) + (m-2) \log(m-2) \cdot \epsilon + \log(m-1) \cdot \epsilon + h_b(\epsilon) \\
	&\le (m - 1) \cdot h_b(\epsilon) + (m-1) \log(m-1) \cdot \epsilon~.
	\end{align*}
\end{proof}

\begin{lemma} \label{lemma:var_bnd_mi} 
Let $W(y | x)$ and $V(y | x)$ be two channels with finite input and output alphabets $\mc{X}$ and $\mc{Y}$.  If
	\begin{align}
		\left| W(y | x) - V(y | x) \right| \le \epsilon \qquad \forall (x,y) \in \mc{X} \times \mc{Y}~,
	\end{align}
then for any input distribution $P$ on $\mc{X}$ we have
	\begin{align}
		\left| I(P,W) - I(P,V) \right| \le 2 (|\mc{Y}| - 1) \cdot h_b(\epsilon) + 2 (|\mc{Y}| - 1) \log(|\mc{Y}| - 1) \cdot \epsilon~,
	\end{align}
where $h_b(\cdot)$ is the binary entropy function.
\end{lemma}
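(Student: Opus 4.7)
The plan is to reduce the mutual information bound to the entropy bound already established in Lemma~\ref{lemma:var_bnd_ent}, by decomposing $I(P,W) = H(PW) - H(W \mid P)$, where $PW$ denotes the output marginal $\sum_x P(x) W(\cdot \mid x)$ and $H(W \mid P) = \sum_x P(x) H(W(\cdot \mid x))$.

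First I would bound the conditional entropy term. For each fixed $x$, the rows $W(\cdot \mid x)$ and $V(\cdot \mid x)$ are probability distributions on $\mc{Y}$ that differ by at most $\epsilon$ coordinate-wise, so Lemma~\ref{lemma:var_bnd_ent} applied to $\mc{S} = \mc{Y}$ gives
\begin{align*}
\bigl| H(W(\cdot \mid x)) - H(V(\cdot \mid x)) \bigr|
\le (|\mc{Y}| - 1)\, h_b(\epsilon) + (|\mc{Y}| - 1)\log(|\mc{Y}| - 1)\, \epsilon.
\end{align*}
Averaging this bound against $P(x)$ yields the same inequality for $|H(W \mid P) - H(V \mid P)|$, since the right side is independent of $x$.

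Next I would bound the output-entropy difference. By the triangle inequality for each $y$,
\begin{align*}
\bigl| (PW)(y) - (PV)(y) \bigr|
\le \sum_x P(x)\, |W(y \mid x) - V(y \mid x)|
\le \epsilon,
\end{align*}
so $PW$ and $PV$ satisfy exactly the hypothesis of Lemma~\ref{lemma:var_bnd_ent} on the alphabet $\mc{Y}$, giving the identical bound for $|H(PW) - H(PV)|$.

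Finally, combining the two bounds via the triangle inequality
$|I(P,W) - I(P,V)| \le |H(PW) - H(PV)| + |H(W\mid P) - H(V\mid P)|$
produces the factor of $2$ in the stated conclusion. There is no real obstacle here; the only thing to verify carefully is that the per-$x$ entropy bound survives averaging by $P$ (which it does, since the bound does not depend on $x$) and that the triangle inequality step does not inflate the coordinate-wise distance beyond $\epsilon$ when forming the output marginals.
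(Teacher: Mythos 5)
Your proposal is correct and takes essentially the same approach as the paper: decompose $I(P,W) = H(PW) - H(W \mid P)$, apply Lemma~\ref{lemma:var_bnd_ent} to the output marginal (after noting the coordinate-wise $\epsilon$ bound survives averaging by $P$) and to each conditional row, then combine via the triangle inequality to obtain the factor of $2$.
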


\begin{proof}
We simply apply Lemma \ref{lemma:var_bnd_ent} twice.  Let $Q_W$ and $Q_V$ be the marginal distributions on $\mc{Y}$ under channels $W$ and $V$ respectively.  Then
	\begin{align*}
	|Q_W(y) - Q_V(y)| \le \sum_{x} P(x) |W(y | x) - V(y | x)| \le \epsilon~.
	\end{align*}
Now we can break apart the mutual information and use Lemma \ref{lemma:var_bnd_ent} on each term:
	\begin{align*}
	|I(P,W) - I(P,V)| &\le | H(Q_W) - H(Q_V)| + \sum_{x} P(x) |H(W(Y | X = x)) - H(V(Y | X = x)) | \\
	&\le 2 (|\mc{Y}| - 1) \cdot h_b(\epsilon) + 2 (|\mc{Y}| - 1) \log(|\mc{Y}| - 1) \cdot \epsilon~.
	\end{align*}
\end{proof}

\subsection{Properties of concatenated fixed composition sets}

Let $\tau(\mbf{x})$ be the type of $\mbf{x}$.  Let $\mbf{T}_n(P) = \{\mbf{x} \in \mc{X}^{n} : \tau(\mbf{x}) = P\}$ be the set of of all length-$n$ vectors of type $P$.  For a vector $\mbf{x}$, let $\mbf{x}_1^m$ be the first $m$ elements of $\mbf{x}$.

\begin{lemma} \label{lemma:cat_set_bd} 
For all finite sets $\mathcal{X}$, and all types $P$ with $p_0 = \min_{x \in \mc{X}} P(x) > 0$, there exists $\eta = \eta(P) < \infty$ such that for sufficiently large $n$, for all $M > 0$:
	\begin{align}
		\frac{|\mbf{T}_n(P)|^{M}}{|\mbf{T}_{Mn}(P)|} \ge \exp( - \eta M \log n )~.
	\end{align}
\end{lemma}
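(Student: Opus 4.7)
The plan is to reduce the claim to the two classical bounds on the size of a type class: the upper bound $|T_{Mn}(P)| \le \exp(Mn H(P))$ and the matching polynomial-gap lower bound $|T_n(P)| \ge (n+1)^{-|\mc{X}|}\exp(n H(P))$, both of which appear as standard lemmas in Csisz\'ar--K\"orner (the lower bound holding for every $n$ such that $nP$ is integer-valued, and we are assuming $P$ is a type so that this holds for $n$ on a suitable arithmetic progression and in particular for all sufficiently large $n$).

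First I would substitute these two bounds directly into the quotient. The exponential factors $\exp(MnH(P))$ cancel exactly, and what remains is
\begin{align}
\frac{|T_n(P)|^{M}}{|T_{Mn}(P)|}
\;\ge\;
\frac{(n+1)^{-M|\mc{X}|}\,\exp(Mn H(P))}{\exp(Mn H(P))}
\;=\;
(n+1)^{-M|\mc{X}|}.
\end{align}
So the issue is purely combinatorial: the numerator loses only a polynomial factor per copy, while the exponential content matches. Note that the hypothesis $p_0 = \min_x P(x) > 0$ is what allows the lower bound on $|T_n(P)|$ to have a clean form once $n$ is large enough that $nP(x) \ge 1$ for every $x$; this is where $\eta$ will be allowed to depend on $P$.

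Next I would convert $(n+1)^{-M|\mc{X}|}$ into the $\exp(-\eta M\log n)$ form in the statement. Writing $(n+1)^{-M|\mc{X}|} = \exp(-M|\mc{X}|\log(n+1))$ and using $\log(n+1) \le 2\log n$ for all $n \ge 2$, the choice $\eta = 2|\mc{X}|$ suffices; alternatively, one can absorb the $+1$ into $\eta$ more tightly for any fixed tolerance. Either way, the constant $\eta$ is finite and depends only on $|\mc{X}|$ (and through the ``$n$ sufficiently large'' clause on $P$, via the denominator of $P$).

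I do not anticipate a real obstacle: the only subtlety is making sure the type class $T_n(P)$ is nonempty, which is handled by restricting $n$ to multiples of the denominator of $P$ and taking $n$ large enough that the Csisz\'ar--K\"orner lower bound applies; both conditions are absorbed into the ``for sufficiently large $n$'' hypothesis in the lemma statement.
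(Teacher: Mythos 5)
Your proof is correct, and it is genuinely more elementary than the paper's. The paper invokes the refined Stirling-type estimate from Csisz\'ar--K\"orner (p.~39),
\begin{align*}
 k H(P) - \tfrac{|\mc{X}|-1}{2}\log(2\pi k) - \nu_1(P) \;\le\; \log|\mbf{T}_k(P)| \;\le\; k H(P) - \tfrac{|\mc{X}|-1}{2}\log(2\pi k) - \nu_2(P)\,,
\end{align*}
and takes the difference to get the $\Omega(-M\log n)$ exponent directly; the $p_0>0$ hypothesis is used there to keep $\nu_1(P),\nu_2(P)$ finite. You instead use the coarse two-sided bound $(n+1)^{-|\mc{X}|}\exp(nH(P)) \le |\mbf{T}_n(P)| \le \exp(nH(P))$, cancel the exponentials, and absorb $(n+1)^{-M|\mc{X}|}$ into $\exp(-\eta M\log n)$ with $\eta = 2|\mc{X}|$. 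This is a clean shortcut: since the lemma only requires \emph{some} finite $\eta$, the polynomial-gap bound suffices and you need not even invoke $p_0 > 0$ (your explanation of what that hypothesis buys is a bit off---the coarse lower bound you use holds for every type with $nP$ integer-valued regardless of support---but this is harmless since your argument never relies on it). What the paper's sharper bound buys is a tighter constant, roughly $\eta \approx \tfrac{|\mc{X}|-1}{2}$ asymptotically, which is irrelevant for the way the lemma is used. Your handling of nonemptiness of $\mbf{T}_n(P)$---restricting to multiples of the denominator of $P$---matches the implicit convention in the paper.
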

\begin{proof}
We begin with the following \cite[p. 39]{CsiszarKorner:82book} :
	\begin{align*}
	k H(P) - \frac{|\mc{X}|-1}{2} \log(2 \pi k) - \nu_1(P)
	\le 
	\log{|\mbf{T}_k(P)|} 
	\le
	k H(P) - \frac{|\mc{X}|-1}{2} \log(2 \pi k) - \nu_2(P)~,
	\end{align*}
for $0 < \nu_1(P) < \infty$ and $0 < \nu_2(P) < \infty$ since $p_x \ge p_0$ for all $x$.
From this we can take the ratio:
	\begin{align*}
	\log \frac{|\mbf{T}_n(P)|^{M}}{|\mbf{T}_{Mn}(P)|}
	&\ge
	-M \frac{|\mc{X}|-1}{2} \log(2 \pi n) - M \nu_1(P)
	+ \frac{|\mc{X}|-1}{2} \log(2 \pi Mn) + \nu_2(P)
	\end{align*}
For fixed $P$ and sufficiently large $n$, this lower bound is $\Omega( M \log n )$, which establishes the result.
\end{proof}

\bibliographystyle{IEEEtran}
\bibliography{indseq}

\begin{thebibliography}{10}
\providecommand{\url}[1]{#1}
\csname url@samestyle\endcsname
\providecommand{\newblock}{\relax}
\providecommand{\bibinfo}[2]{#2}
\providecommand{\BIBentrySTDinterwordspacing}{\spaceskip=0pt\relax}
\providecommand{\BIBentryALTinterwordstretchfactor}{4}
\providecommand{\BIBentryALTinterwordspacing}{\spaceskip=\fontdimen2\font plus
\BIBentryALTinterwordstretchfactor\fontdimen3\font minus
  \fontdimen4\font\relax}
\providecommand{\BIBforeignlanguage}[2]{{%
\expandafter\ifx\csname l@#1\endcsname\relax
\typeout{** WARNING: IEEEtran.bst: No hyphenation pattern has been}%
\typeout{** loaded for the language `#1'. Using the pattern for}%
\typeout{** the default language instead.}%
\else
\language=\csname l@#1\endcsname
\fi
#2}}
\providecommand{\BIBdecl}{\relax}
\BIBdecl

\bibitem{EswaranSSG:07isit}
K.~Eswaran, A.~Sarwate, A.~Sahai, and M.~Gastpar, ``Binary additive channels
  with individual noise sequences and limited active feedback,'' in
  \emph{Proceedings of the 2007 IEEE International Symposium on Information
  Theory}, Nice, France, 2007.

\bibitem{DraperFK:05rateless}
S.~Draper, B.~Frey, and F.~Kschischang, ``Rateless coding for non-ergodic
  channels with decoder channel state information,'' submitted to IEEE
  Transactions of Information Theory.

\bibitem{ShayevitzF:06empirical}
O.~Shayevitz and M.~Feder, ``Achieving the empirical capacity using feedback:
  Memoryless additive models,'' \emph{IEEE Transactions on Information Theory},
  vol.~55, no.~3, pp. 1269--1295, March 2009.

\bibitem{Sahai:06delay}
A.~Sahai, ``Why block-length and delay behave differently if feedback is
  present?'' \emph{IEEE Transactions on Information Theory}, vol.~54, no.~5,
  pp. 1860--1886, May 2008.

\bibitem{DraperAllerton06}
A.~Sahai and S.~Draper, ``Beating the {Burnashev} bound using noisy feedback,''
  in \emph{Proceedings of the 44th Allerton Conference on Communication,
  Control, and Computing}, Monticello, IL, Sep. 2006.

\bibitem{Sahai:07balancing}
A.~Sahai, ``Balancing forward and feedback error correction for erasure
  channels with unreliable feedback,'' submitted to IEEE Transactions of
  Information Theory.

\bibitem{Soljanin:03harq}
E.~Soljanin, ``Hybrid {ARQ} in wireless networks,'' in \emph{DIMACS Workshop on
  Network Information Theory}, March 2003.

\bibitem{Horstein:63feedback}
M.~Horstein, ``Sequential transmission using noiseless feedback,'' \emph{IEEE
  Transactions on Information Theory}, vol.~9, no.~3, pp. 136--143, July 1963.

\bibitem{SchalkwijkK:66feedback}
J.~P.~M. Schalkwijk and T.~Kailath, ``{A coding scheme for additive noise
  channels with feedback I: No bandwidth constraint},'' \emph{IEEE Transactions
  on Information Theory}, vol.~12, pp. 172--182, 1966.

\bibitem{Burnashev:76feedback}
M.~Burnashev, ``Data transmission over a discrete channel with feedback, random
  transmission time,'' \emph{Problems of Information Transmission}, vol.~12,
  no.~4, October--December 1976.

\bibitem{CoverP:89GaussianFeedback}
T.~Cover and S.~Pombra, ``Gaussian feedback capacity,'' \emph{IEEE Transactions
  on Information Theory}, vol.~35, pp. 37--43, 1989.

\bibitem{OoiW:98iterative}
J.~Ooi and G.~Wornell, ``Fast iterative coding techniques for feedback
  channels,'' \emph{IEEE Transactions on Information Theory}, vol.~44, no.~7,
  pp. 2960--2976, November 1998.

\bibitem{Ooi:98feedback}
J.~Ooi, \emph{Coding for channels with feedback}.\hskip 1em plus 0.5em minus
  0.4em\relax Boston, MA: Kluwer Academic Publishers, 1998.

\bibitem{Kim:06GaussianFeedback}
\BIBentryALTinterwordspacing
Y.-H. Kim, ``Feedback capacity of stationary {Gaussian} channels,'' 2006,
  submitted to IEEE Transactions of Information Theory. [Online]. Available:
  \url{{http://arxiv.org/abs/cs.IT/0602091}}
\BIBentrySTDinterwordspacing

\bibitem{GastparK:06noisy}
M.~Gastpar and G.~Kramer, ``On noisy feedback for interference channels,'' in
  \emph{Proceedings of the 2006 Asilomar Conference on Signals, Systems, and
  Computers}, 2006.

\bibitem{Wigger:06kailath}
M.~Wigger, ``Noisy feedback is strictly better than no feedback on the
  {Gaussian MAC},'' 2006 Kailath Symposium, July 2006.

\bibitem{KimLW:06noisyfeedback}
Y.-H. Kim, A.~Lapidoth, and T.~Weissman, ``On reliability of {Gaussian}
  channels with noisy feedback,'' in \emph{Proceedings of the 44th Allerton
  Conference on Communication, Control, and Computing}, September 2006.

\bibitem{Luby:02LT}
M.~Luby, ``{LT} codes,'' in \emph{Proceedings of the 43rd Annual IEEE Symposium
  on Foundations of Computer Science}, 2002.

\bibitem{Shokrollahi:03fountain}
A.~Shokrollahi, ``Fountain codes,'' in \emph{Proceedings of the 41st Allerton
  Conference on Communication, Control, and Computing}, October 2003, pp.
  1290--1297.

\bibitem{Shulman:03commonbc}
N.~Shulman, ``Communication over an unknown channel via common broadcasting,''
  Ph.D. dissertation, Tel Aviv University, 2003.

\bibitem{DraperFK:04isit}
S.~Draper, B.~Frey, and F.~Kschischang, ``Efficient variable length channel
  coding for unknown {DMCs},'' in \emph{Proceedings of the 2004 International
  Symposium on Information Theory}, Chicago, USA, 2004.

\bibitem{TchamkertenT:06variable}
A.~Tchamkerten and I.~E. Telatar, ``Variable length coding over an unknown
  channel,'' \emph{IEEE Transactions on Information Theory}, vol.~52, no.~5,
  pp. 2126--2145, May 2006.

\bibitem{ShulmanF:04prior}
N.~Shulman and M.~Feder, ``The uniform distribution as a uniform prior,''
  \emph{IEEE Transactions on Information Theory}, vol.~50, no.~6, pp.
  1356--1362, June 2004.

\bibitem{CsiszarKorner:82book}
I.~{Csisz\'{a}r} and J.~{K\"{o}rner}, \emph{Information Theory: Coding Theorems
  for Discrete Memoryless Systems}.\hskip 1em plus 0.5em minus 0.4em\relax
  Budapest: Akad\'{e}mi Kiad\'{o}, 1982.

\bibitem{Hoeffding:63inequality}
W.~Hoeffding, ``Probability inequalities for sums of bounded random
  variables,'' \emph{Journal of the American Statistical Association}, vol.~58,
  no.~1, pp. 13--30, March 1963.

\bibitem{HughesT:96exponent}
B.~Hughes and T.~Thomas, ``On error exponents for arbitrarily varying
  channels,'' \emph{IEEE Transactions on Information Theory}, vol.~42, no.~1,
  pp. 87--98, 1996.

\bibitem{Schwartz:63feedback}
L.~Schwartz, ``Feedback for error control and two-way communication,''
  \emph{IEEE Transactions on Communications Systems}, vol.~11, no.~1, pp.
  49--56, March 1963.

\bibitem{Hayes:68feedback}
J.~Hayes, ``Adaptive feedback communications,'' \emph{IEEE Transactions on
  Communications Technology}, vol.~16, no.~1, pp. 29--34, February 1968.

\bibitem{AhmedKSA:06outage}
N.~Ahmed, M.~Khojastepour, A.~Sabharwal, and B.~Aazhang, ``Outage minimization
  with limited feedback for the fading relay channel,'' \emph{IEEE Transactions
  on Communications System}, vol.~54, no.~4, pp. 659--669, April 2006.

\bibitem{LoveHLGRA:08limited}
D.~Love, R.~{Heath, Jr.}, V.~Lau, D.~Gesbert, B.~Rao, and M.~Andrews, ``An
  overview of limited feedback in wireless communication systems,'' \emph{IEEE
  Journal on Selected Areas in Communications}, vol.~26, no.~8, pp. 1341--1365,
  October 2008.

\bibitem{Ahlswede:78elimination}
R.~Ahlswede, ``{Elimination of correlation in random codes for arbitrarily
  varying channels},'' \emph{Zeitschrift f\"{u}r Wahrscheinlichkeitstheorie und
  Verwandte Gebiete}, vol.~44, no.~2, pp. 159--175, 1978.

\bibitem{SarwateG:07rateless}
A.~Sarwate and M.~Gastpar, ``Rateless codes for {AVC} models,'' November 2007,
  submitted to IEEE Transactions of Information Theory.

\bibitem{MerhavF:98universal}
N.~Merhav and N.~Feder, ``Universal prediction,'' \emph{IEEE Transactions on
  Information Theory}, vol.~44, no.~6, pp. 2124--2147, October 1998.

\bibitem{LomnitzF:09isit}
Y.~Lomnitz and M.~Feder, ``Feedback communication over individual channels,''
  in \emph{Proceedings of the 2009 International Symposium on Information
  Theory}, Seoul, South Korea, 2009.

\bibitem{KozatS:06ita}
S.~Kozat and A.~Singer, ``Universal switching linear least squares
  prediction,'' in \emph{Proc. of the 2006 Information Theory and its
  Applications Workshop}.\hskip 1em plus 0.5em minus 0.4em\relax La Jolla, CA:
  UCSD, February 2006.

\bibitem{Feder:06kailath}
M.~Feder, ``Achieving the empirical capacity of individual noise channels using
  feedback,'' 2006 Kailath Symposium, July 2006.

\end{thebibliography}

\end{document}